\newtheorem{theorem}{Theorem}
\newtheorem{lemma}[theorem]{Lemma}
\newtheorem{fact}[theorem]{Fact}
\newtheorem{definition}[theorem]{Definition}
\newtheorem{corollary}[theorem]{Corollary}
\newcommand{\cond}{\mathcal{E}}
\newcommand{\ex}{\mathbb{E}}
\newcommand{\f}{f_{A^*}}
\newcommand{\samp}[1]{(s_1, s_2, \mathcal{E}) \sim ID(#1)}
\newcommand{\LR}{\mathcal{LBR}}
\newcommand{\SR}{\mathcal{SBR}}
\title{Near-Linear Time Edit Distance for Indel Channels}
\author{
  Arun Ganesh
  \footnote{Department of Electrical Engineering and Computer Sciences, UC Berkeley. Email: \texttt{arunganesh@berkeley.edu}. Supported by NSF Award CCF-1535989.}
  \and
  Aaron Sy
  \footnote{Department of Electrical Engineering and Computer Sciences, UC Berkeley. Email: \texttt{raaronsy@gmail.com}. Supported by NSF Award CCF-1535989.}
  }
\date{} 
\begin{document}

\maketitle

\begin{abstract}
    We consider the following model for sampling pairs of strings: $s_1$ is a uniformly random bitstring of length $n$, and $s_2$ is the bitstring arrived at by applying substitutions, insertions, and deletions to each bit of $s_1$ with some probability. We show that the edit distance between $s_1$ and $s_2$ can be computed in $O(n \ln n)$ time with high probability, as long as each bit of $s_1$ has a mutation applied to it with probability at most a small constant. The algorithm is simple and only uses the textbook dynamic programming algorithm as a primitive, first computing an approximate alignment between the two strings, and then running the dynamic programming algorithm restricted to entries close to the approximate alignment. The analysis of our algorithm provides theoretical justification for alignment heuristics used in practice such as BLAST, FASTA, and MAFFT, which also start by computing approximate alignments quickly and then find the best alignment near the approximate alignment. Our main technical contribution is a partitioning of alignments such that the number of the subsets in the partition is not too large and every alignment in one subset is worse than an alignment considered by our algorithm with high probability. Similar techniques may be of interest in the average-case analysis of other problems commonly solved via dynamic programming. 
\end{abstract}
\section{Introduction}
Edit distance is an important string similarity measure whose computation has applications in many fields including computational biology. Its simplest variant is the Levensthein distance, which is the minimum number of insertions, deletions, or substitutions required to turn the first string into the second. A textbook dynamic programming algorithm computes the edit distance between two length $n$ strings in $O(n^2)$ time (see e.g. Section 6.3 of \cite{DasguptaPV08}), and the best known worst-case exact algorithm runs in $O(\frac{n^2}{\ln  ^2 n})$ time \cite{MasekP80}. Assuming the Strong Exponential Time Hypothesis, Backurs and Indyk showed that no $O(n^{2-\epsilon})$ time algorithm exists \cite{BackursI14} for any $\epsilon > 0$, and Bringmann and Künnemann extended this result to the special case of bitstrings, suggesting that these algorithms are near-optimal \cite{BringmannK15}. 

In many practical applications, a quadratic runtime is prohibitively expensive. For example, it was once estimated that using the textbook algorithm to align the full genomes of a human and a mouse (although not a very practical problem) would take 95 CPU years \cite{Frith08}. When the edit distance is small, one can do better. An immediate result is that if the edit distance between two length $n$ strings is at most $d$, it can be computed in time $O(nd)$ (by considering only entries in the dynamic programming table which are distance at most $d$ from entries indexed $(i, i)$ for some $i$), and Landau et al. give a more nuanced  algorithm which finds the edit distance in time $O(n + d^2)$ \cite{LandauMS98}. However, when e.g. aligning the sequences of two different species the edit distance can still be as large as $\Omega(n)$, so these results do not offer substantial improvements over the textbook algorithm.

Motivated by this and the aforementioned lower bounds, there have been many efforts to design faster algorithms. Many worst-case approximation algorithms exist for the problem (e.g. \cite{BatuES06, AndoniKO10, AndoniO11, CDGK18}). However, most results give super-constant approximation ratios, and even the known constant approximation ratios are perhaps too large for practical applications. For example, popular knowledge suggests that a 3-approximation algorithm\footnote{The approximation ratio proven by \cite{CDGK18} is 1680, though they conjecture their algorithm is actually a $(3+\epsilon)$-approximation.} for edit distance when applied to genome sequences is not guaranteed to determine that humans are more closely related to dogs than chickens.

However, there is good reason to believe that in biological applications, the subquadratic lower bound is not applicable. Roughly speaking, the lower bounds of \cite{BackursI14, BringmannK15} say that every part of one string must be compared to every part of another string in order to compute the edit distance exactly. In practice, this should rarely be true. e.g. when aligning two genomes, there is good reason to believe that the beginning of the first genome only needs to be compared to the beginning of the second genome. Observations like this motivate the need for \textit{average-case analysis} of edit distance algorithms. There are already several results on average-case analyses of edit distance. For example, \cite{AndoniK08} gives an approximation algorithm when the inputs are chosen adversarially but then perturbed, \cite{Gawrychowski12} gives an exact algorithm when the inputs are compressible, and \cite{Kuszmaul18} gives an approximation algorithm when one of the input strings satisfies a pseudo-randomness condition. Note that all these results require losing an approximation factor (which as mentioned before is undesirable) and/or for fairly specific conditions (such as compressibility) to hold for the input. 

\subsection{Our Contribution}

In this paper, we consider a model for average-case analysis of edit distance called the \textit{indel channel} which is motivated by biological applications. In this model, we generate a random bitstring of length $n$ as our first string (using bitstrings simplifies the presentation, and the results generalize easily to larger alphabets), and then at each position in the string randomly apply each of the three types of mutations (insertion, deletion, substitution) independently with some probability to get the second string. We let $ID(n)$ denote the distribution of pairs of strings and sets of mutations generated by this model. This model of random string mutation is popular as an extension of the CFN model for biological mutations in computational biology, and problems based on the indel channel have been defined and studied in the areas of sequence alignment \cite{Frith19}, phylogenetic reconstruction \cite{DaskalakisR10, AndoniBH10, AndoniDHR12, GaneshZ18}, and trace reconstruction \cite{HolensteinMPW08, NazarovP17, HoldenPP18}. We show that for pairs of strings generated by this model, we can compute their exact edit distance in near-linear time with high probability:

\begin{theorem}[Informal]\label{thm:main-informal}
Let $s_1$ be a uniformly random bitstring of length $n$ and $s_2$ be the bitstring generated by applying substitution, insertion, and deletion to each bit of $s_1$ each uniformly at random and with probability at most some constant. Then with high probability we can compute the edit distance between $s_1$ and $s_2$ in $O(n \ln  n)$ time.
\end{theorem}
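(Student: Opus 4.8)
The plan is to follow the two-phase strategy outlined in the abstract: first compute, in $O(n \ln n)$ time, an approximate alignment between $s_1$ and $s_2$, and then run the textbook dynamic program restricted to a narrow band of entries around that approximate alignment, which costs $O(nw)$ time where $w$ is the bandwidth. To get $O(n \ln n)$ total runtime we will take $w = O(\ln n)$. The correctness claim then reduces to two statements: (i) the true optimal alignment stays within the band with high probability, and (ii) the approximate alignment itself can be found quickly. I would first set up notation for alignments as monotone staircase paths in the $(n+1)\times(n+1)$ grid, and observe that the cost of the true alignment is, with high probability, very close to the expected number of mutations applied by the channel, which is $\Theta(n)$ times the per-bit mutation probability; call this $D$. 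The key point is that $D$ is a $(1+o(1))$-multiplicative estimate of $\mathrm{edit}(s_1,s_2)$ with high probability, by a Chernoff bound on the independent per-position mutations.

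The heart of the argument — and what I expect to be the main obstacle — is showing that no ``far-from-diagonal'' alignment can be competitive, i.e., establishing the partitioning of alignments promised in the abstract. Here the plan is: partition the set of all alignments according to, roughly, how far they deviate from the region traced out by the approximate alignment, bucketing by powers of two. For each bucket, I want to show that every alignment in that bucket has cost strictly larger than $D$ (hence larger than the true edit distance) with high probability, and then union-bound over the (not too many) buckets. The reason a far alignment should be bad is that once an alignment is forced $k$ positions away from where the channel actually put the deletions/insertions, it must ``pay'' for roughly $k$ extra mismatches, because over any window of length $\ell$ two independent uniform bitstrings disagree in $\approx \ell/2$ positions — so a misaligned stretch of length $\ell$ incurs cost $\approx \ell/2$ rather than the $\approx (\text{mutation prob})\cdot\ell$ that the correct alignment pays. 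Making this rigorous requires (a) a careful counting/encoding argument to bound the number of alignments in each bucket so the union bound survives (this is where the cleverness lies — a naive count of alignments is exponential), and (b) a concentration statement that simultaneously controls the cost of all alignments in a bucket, not just a fixed one, which is exactly why the bucketing by deviation is needed rather than a per-alignment bound.

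For the first phase, producing the approximate alignment, I would use an anchor-based heuristic in the spirit of BLAST/FASTA: find short exact-match seeds (substrings of length $\Theta(\ln n)$) shared between $s_1$ and $s_2$ and lying near the diagonal, chain them, and interpolate linearly between consecutive anchors. Because $s_1$ is uniformly random, a window of length $c \ln n$ has a distinct fingerprint with high probability, so a correctly-transmitted window of this length in $s_2$ has a unique matching location in $s_1$; since the channel only mutates a small constant fraction of bits, long unmutated windows occur frequently enough that consecutive anchors are $O(\ln n)$ apart, which both makes the chaining findable in $O(n \ln n)$ time (e.g., via hashing all length-$(c\ln n)$ substrings) and guarantees the interpolated path is within $O(\ln n)$ of the true alignment everywhere. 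Combining: the band of width $w = \Theta(\ln n)$ around this path contains the optimal alignment (phase-two correctness), the restricted DP runs in $O(n \ln n)$ time, and all the high-probability events — concentration of $\mathrm{edit}(s_1,s_2)$ around $D$, the bucket-wise cost lower bounds, and the anchor-density bound — hold simultaneously by a union bound, giving the theorem.
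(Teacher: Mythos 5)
Your high-level two-phase skeleton (compute an approximate alignment, then run the DP restricted to a band of width $O(\ln n)$ around it) matches the paper, but there are two genuine gaps. First, the anchor-based approximate-alignment procedure does not work at constant mutation rate. You need seed windows of length $\Theta(\ln n)$ to have unique fingerprints in a random $s_1$ (that already forces the window length to be at least $\approx 2\log_2 n$), but a window of length $c\ln n$ survives the channel with no mutation only with probability about $(1-\rho)^{c\ln n}=n^{-\Theta(\rho c)}$, so unmutated windows of that length are $n^{\Theta(\rho c)}$ apart, i.e.\ polynomially far apart, not $O(\ln n)$ apart. The claim that ``consecutive anchors are $O(\ln n)$ apart'' fails, the linear interpolation between anchors drifts by far more than $O(\ln n)$ over a gap of polynomial length, and the whole phase breaks unless $\rho\to 0$. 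The paper instead walks along $s_1$ in steps of $k\ln n$ and, for each step, computes \emph{edit distances} (not exact matches) between $O(\ln n)$-length windows of $s_1$ and a constant number of candidate windows of $s_2$, using a gap between the small-ED regime near the canonical alignment (Corollary~\ref{cor:smallED}) and the large-ED regime for independent random windows (Corollary~\ref{cor:largeED}) to localize the alignment; this tolerates constant mutation rate.

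Second, the partitioning and counting argument, which you correctly flag as ``where the cleverness lies,'' is precisely the paper's main technical contribution and is not supplied in your proposal. Bucketing alignments ``by powers of two of deviation'' and hoping for a single concentration bound per bucket does not obviously control the union bound: alignments within the same deviation radius can still disagree with the canonical alignment on completely different index sets, so there is no single random variable to concentrate. The paper's resolution is structural: it defines breaks of an alignment from the canonical alignment $A^*$, classifies them as long ($\geq k\ln n$) or short, and introduces the short-break and long-break replacement maps $\SR,\LR$. Fact~\ref{fact:lbr} and Corollary~\ref{corollary:sbrrange} reduce the problem to showing that no alignment in the range of $\SR$ beats $A^*$, and those alignments are partitioned by their \emph{breakpoint configuration}; alignments sharing a configuration disagree with $A^*$ on exactly the same set of $s_1$-indices, so the off-diagonal lemma (Lemma~\ref{lemma:offdiag}/\ref{lemma:offdiag-id}) applies to that restriction and gives a failure probability $n^{-\Omega(ik)}$ for a configuration with total break length about $ik\ln n$. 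Because every break in the range of $\SR$ has length $\geq k\ln n$, the number of configurations with $i$ breaks is at most $(nk\ln n+1)^i$, and the union bound closes. A smaller point: you use the expected number of channel mutations $D$ as a hard threshold (``cost strictly larger than $D$''), but the cost of the canonical alignment concentrates around $D$ and can exceed it; you need a high-probability \emph{upper} bound such as $\tfrac{3}{2}\rho_s n$, as in Lemma~\ref{lemma:offdiag}.
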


\textbf{Our Techniques.} Our algorithm is simple, using only the dynamic programming algorithm as a primitive. The high-level approach is as follows: While we cannot use the dynamic programming algorithm to compute the edit distance between the two strings and get a near-linear time algorithm, we can repeatedly use it to compute the edit distance between two substrings of length $k \ln  n$, where $k$ is a (sufficiently large) constant. Under the indel channel, a substring of length $k \ln  n$ of the first string $s_1$ and the corresponding substring of the second string $s_2$ have low edit distance compared to two random substrings with high probability. So by computing the edit distance between two substrings of length $k \ln  n$, we can determine if the correct alignment places these two substrings close to each other.

We can now use this as a primitive to find an alignment of the two strings that is an approximation of the ``canonical'' alignment, i.e. the alignment corresponding to the insertions and deletions caused by indel channel. If we know bit $i$ of $s_1$ is aligned with bit $j$ of $s_2$, then with high probability there are only $O(\ln  n)$ indices in $s_2$ that bit $i+k \ln  n$ of $s_1$ can be aligned with. Even if we only have an estimate for where bit $i$ of $s_1$ is aligned with in $s_2$ that is $O(\ln n)$ bits off, with high probability the number of indices bit $i+k\ln n$ of $s_1$ can be aligned with is still $O(\ln n)$. So, once we have computed an approximate alignment for the first $i$ bits of $s_1$, we can iteratively extend the approximate alignment by using a small number of edit distance computations on bitstrings of length $O(\ln n)$ to determine approximately where bit $i+k\ln n$ of $s_1$ should be aligned. We note that some past works studying the indel channel in phylogenetic reconstruction use the trivial ``diagonal'' alignment (e.g. \cite{DaskalakisR10, GaneshZ18}) as an approximate alignment. 

Once we have an approximate alignment, our algorithm is straightforward: Use the dynamic programming algorithm, but only compute entries in the dynamic programming table which are close to the approximate alignment. We show that with high probability, the best alignment is close to the canonical alignment suggested by the indel channel, which is close to our approximate alignment, giving the correctness of this algorithm. To show this statement holds, we would like to use the fact that that any alignment which differs significantly from the canonical alignment is better than the canonical alignment with probability decaying exponentially in the difference between the two alignments. However, there are too many alignments for us to conclude by combining this fact with a union bound. Instead, we construct a partition $\mathcal{B}$ of the alignments such that for each element $B$ of the partition $\mathcal{B}$, the alignments in $B$ are structurally similar. Roughly speaking, this lets us argue for each $B$ that with probability much smaller than $1/|\mathcal{B}|$ all alignments in $B$ are not optimal. We can then take a union bound over the subsets in $\mathcal{B}$ instead of over all alignments to get the desired statement.

We note that techniques similar to finding an approximate alignment and then computing the DP table restricted to entries near this alignment are used in heuristics in practice such as BLAST \cite{AltschulGMML90}, FASTA \cite{PearsonL88}, and MAFFT \cite{KatohMKM02}. Our analysis thus can be viewed as theoretical support for these kinds of heuristics.

The rest of the paper is as follows: In Section~\ref{sec:prelim}, we define the indel channel model formally, give some simple probability facts that are useful, and define some terms that appear frequently in the analysis. In Section~\ref{sec:subonly}, as a warm-up we show that in the substitution-only case, the optimal alignment is close to the diagonal. In Section~\ref{sec:alignment} we describe and analyze our algorithm for finding an approximate alignment. In Section~\ref{sec:indels}, we extend the analysis from Section~\ref{sec:subonly} to the general case, completing the proof of Theorem~\ref{thm:main-informal}.
\section{Preliminaries and Definitions}\label{sec:prelim}

To simplify the presentation, we will often treat possibly non-integer numbers like $\ln n, k \ln n$ and $n / k \ln n$ as integers without explicitly rounding them first. The correctness of all proofs in the paper is unaffected by replacing these quantities by their rounded versions (e.g. $\lceil \ln n \rceil$) where appropriate.

\subsection{Problem Setup}

In this section, we describe the model used to generate the pairs of correlated strings and formally state our main result. We start by sampling a uniformly random bitstring $s_1 \sim \{0, 1\}^n$. We pass $s_1$ through an indel channel to arrive at a new bitstring $s_2$. When passed through the indel channel, for the $j$th bit of $s_1$, $b_j := (s_1)_j$:

\begin{itemize}
    \item $b_j$ is substituted, i.e. flips, with probability $p_s$.
    \item $b_j$ is deleted, with probability \begin{itemize}
        \item $p_d$ if the previous bit $b_{j-1}$ was not deleted,
        \item $q_d > p_d$ if the previous bit $b_{j-1}$ was deleted. (This is similar but not equivalent to deleting a geometric number of bits whenever a deletion occurs)
    \end{itemize}
    That is, whenever a bit $b_j$ is deleted, an additional number of bits equal to roughly a geometric random variable with mean $1/(1 - q_d)$ are deleted to the right of $b_j$.
    \item An insertion event occurs with probability $p_i$, inserting a uniformly random bit string $t \sim \{0, 1\}^I$ with length $I\sim \operatorname{Geo} \left(1 - {q_i}\right)$ ($I$ has mean $1/(1-q_i)$) to the right of $b_j$. Inserted bits are not further acted upon by the indel channel.
\end{itemize}

We call each of these edits, and use $\mathcal{E}$ to denote the set of edits occurring in the indel channel. Each mutation happens independently for each bit and across different bits. As mentioned before, this definition of the indel channel is chosen to parallel models in both the computer science theory and computational biology communities that account for splicing in/out entire subsequences rather than individual sites (e.g. see \cite{Frith19} for an example of a model for mutation which uses geometric indel lengths; of course, setting $q_d = p_d, q_i = 0$ gives the setting where only single bits are spliced in/out). We require that

\begin{equation} \label{eq:upperbounds}
p_s \leq \rho_s, \qquad\qquad\qquad p_d \leq \rho_d, \frac{1-\rho_d}{1-q_d} \leq \rho_d', \qquad\qquad\qquad p_i \leq \rho_i, \frac{1}{1-q_i} \leq \rho_i'. 
\end{equation}

for some small constants $\{\rho\}:= \{\rho_s, \rho_d, \rho_d', \rho_i, \rho_i'\}$. Our lemma/theorem statements will implicity assume \eqref{eq:upperbounds} holds, and our proofs will specify certain inequalities which must hold for the values $\{\rho\}$, thus specifying a range of values for the mutation probabilities for which our algorithm is proven to work. We do not attempt to the optimize the values of $\{\rho\}$ for which our algorithm works, but will state the exact inequalities that need to hold for $\{\rho\}$ when it is convenient to do so. We use $ID(n)$ to denote the distribution of tuples $(s_1, s_2, \mathcal{E})$ arrived at by this process for some $p, q$ values - we often make statements about $\samp{n}$ which apply for any realization of the $p, q$ values satisfying the constraints given by $\{\rho\}$, in which case we will not specify what these values are. Given $\samp{n}$ we want to compute the edit distance $ED(s_1, s_2)$ between $s_1$ and $s_2$ as quickly as possible. For simplicity, we specifically use the Levenshtein distance in our proofs, but they can easily be generalized to other sets of penalties for edits. We now formally restate Theorem~\ref{thm:main-informal} as our main result:

\begin{theorem}\label{thm:main}
Assuming \eqref{eq:upperbounds} holds for certain constants $\{\rho\}$, there exists a (deterministic) algorithm running in time $O(n \ln n)$ that computes $ED(s_1, s_2)$ for $\samp{n}$ with probability $1-n^{-\Omega(1)}$.
\end{theorem}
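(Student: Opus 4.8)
The plan is to follow the two-phase structure announced in the introduction: first compute an approximate alignment quickly, then run the textbook dynamic program restricted to a narrow band around that alignment. For the first phase, I would process $s_1$ in blocks of length $k\ln n$ for a large constant $k$. Inductively, suppose we have an estimate $\hat{a}(i)$ for the index in $s_2$ aligned with bit $i$ of $s_1$ that is within $O(\ln n)$ of the true canonical position. To extend it to bit $i + k\ln n$, we compute, via the $O((\ln n)^2)$-time textbook DP, the edit distance between the block $s_1[i+1, i+k\ln n]$ and each candidate window $s_2[\hat{a}(i)+\delta+1, \hat{a}(i)+\delta+k\ln n]$ for the $O(\ln n)$ admissible shifts $\delta$, and pick the shift minimizing this distance. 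The key probabilistic claim, which I would prove using the randomness of $s_1$ together with the mutation bounds \eqref{eq:upperbounds}, is that for $k$ large enough the canonical (correct) shift has edit distance $\le c_1 \ln n$ while any shift that is off by more than $c_2\ln n$ has edit distance $> c_1 \ln n$, each with failure probability $n^{-\omega(1)}$; a union bound over the $n/k\ln n$ blocks then gives the whole approximate alignment correct to within $O(\ln n)$ with probability $1 - n^{-\Omega(1)}$. The running time of this phase is $O(n/k\ln n) \cdot O(\ln n) \cdot O((\ln n)^2) = O(n(\ln n)^2)$ — one has to be slightly careful here, using a coarser net of shifts or shorter blocks, to bring this to $O(n\ln n)$, but that is a matter of parameter tuning, not a real obstacle.

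For the second phase, having an approximate alignment $\hat a$, I run the standard $O(n^2)$ DP but only fill in entries $(i,j)$ with $|j - \hat a(i)| \le w$ for a window $w = O(\ln n)$; this costs $O(nw) = O(n\ln n)$ time. Correctness reduces to showing that the optimal alignment never strays more than $w$ from $\hat a$, and since $\hat a$ is within $O(\ln n)$ of the canonical alignment, it suffices to show the optimal alignment stays within $O(\ln n)$ of the canonical alignment with high probability. This is exactly the statement the introduction flags as the main technical content, and it is where I expect the real difficulty to lie: one cannot simply union-bound over all alignments, since there are exponentially many of them. The approach here is to define the canonical alignment $A^*$ from $\mathcal{E}$, and for a candidate alignment $A$ that deviates from $A^*$, bound the probability that $\mathrm{cost}(A) \le \mathrm{cost}(A^*)$. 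For a fixed $A$ this probability should decay exponentially in the deviation, because wherever $A$ and $A^*$ disagree, $A$ is forced to match up bits of $s_1$ and $s_2$ that are essentially independent uniform bits (or to pay for extra indels), and the uniform randomness of $s_1$ makes each such forced comparison cost roughly $1/2$ in expectation with concentration.

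The crux is then the combinatorial/probabilistic bookkeeping: the number of alignments at deviation $\ell$ is roughly $2^{\Theta(\ell)}$ per starting point and $n$ starting points, so a naive union bound needs the per-alignment failure probability to beat $n \cdot 2^{-\Theta(\ell)}$, which is tight and fragile. To get real slack I would partition the set of alignments into classes $\mathcal{B}$ — grouping alignments that are ``structurally similar,'' i.e. that agree on the pattern of indels relative to $A^*$ outside of a bounded set of disagreement intervals — so that $|\mathcal{B}|$ is only polynomial (or quasi-polynomial) in $n$, while within each class $B$ the event ``every alignment in $B$ beats $A^*$'' can be controlled by a single exponentially small bound, since all alignments in $B$ share the same ``expensive'' forced comparisons. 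Concretely I would: (i) build $A^*$ and show it has cost concentrated around its mean; (ii) for a single deviating alignment, isolate the maximal intervals where it differs from $A^*$ and show each contributes an independent geometric-type penalty from the uniform bits of $s_1$, so that $\Pr[\mathrm{cost}(A) \le \mathrm{cost}(A^*)]$ is exponentially small in the total deviation; (iii) define the partition $\mathcal{B}$ and bound $|\mathcal{B}|$ by counting the possible disagreement-interval structures; (iv) union-bound over $\mathcal{B}$ to conclude that with probability $1 - n^{-\Omega(1)}$ no alignment deviating by more than $O(\ln n)$ from $A^*$ is optimal, hence the optimal alignment lies in the band and the banded DP outputs $ED(s_1,s_2)$ exactly. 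Composing the two phases and their failure probabilities, and checking the running time is $O(n\ln n)$, yields the theorem. The warm-up substitution-only analysis of Section~\ref{sec:subonly} is exactly the skeleton of step (ii)–(iv) without the complication of indels shifting indices, so I would develop the argument there first and then absorb the indel bookkeeping in Section~\ref{sec:indels}.
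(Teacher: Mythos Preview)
Your proposal is correct and follows essentially the same two-phase structure and partition strategy as the paper: your ``disagreement-interval structures'' are exactly the paper's \emph{breakpoint configurations} $\mathcal{B}_i$, and your reduction in step~(iv) is the content of Corollary~\ref{corollary:sbrrange} together with Lemma~\ref{lemma:offdiaggen-id} (via the $\SR/\LR$ maps). The one detail you leave as ``parameter tuning'' in Phase~1 is resolved in the paper not by shortening blocks but by testing only $O(1)$ candidate shifts per block, spaced $\ln n$ apart rather than $1$ apart---Corollary~\ref{cor:smallED} only needs the guess to land within $\ln n$ of $\f(i)$, so a coarse net of constant size suffices and the Phase~1 cost is $O(n/\ln n)\cdot O(1)\cdot O(\ln^2 n)=O(n\ln n)$ directly.
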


\subsection{Probability Facts}
We start with some basic probability facts appearing in our analysis. We denote the number of ways to sort $a+b+c$ elements into three groups of size $a, b, c$, i.e. trinomial, by $\binom{a+b+c}{a, b, c}$. This of course equals $\frac{(a+b+c)!}{a!b!c!}$. When the trinomial appears, we use Stirling's approximation to bound its value:

\begin{fact}[Stirling's approximation]
$\sqrt{2 \pi} n^{n+1/2} e^{-n} \leq n! \leq e n^{n+1/2} e^{-n}.$
\end{fact}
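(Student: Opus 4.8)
The plan is to reduce the two-sided estimate to a single monotonicity statement. Set $a_n := n!\,/\,(n^{n+1/2}e^{-n})$; then the claimed inequalities are precisely $\sqrt{2\pi}\le a_n\le e$ for every integer $n\ge 1$. First I would compute the ratio of consecutive terms: a direct cancellation gives $a_n/a_{n+1} = \frac1e\bigl(1+\tfrac1n\bigr)^{n+1/2}$, so $\ln(a_n/a_{n+1}) = (n+\tfrac12)\ln\bigl(1+\tfrac1n\bigr) - 1$. Proving that $a_n$ is non-increasing thus amounts to the elementary inequality $(n+\tfrac12)\ln\bigl(1+\tfrac1n\bigr)\ge 1$ for all $n\ge 1$.

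To dispatch that inequality I would substitute $u := \tfrac1{2n+1}$, which satisfies $1+\tfrac1n = \tfrac{1+u}{1-u}$ and $\tfrac1{n+1/2} = 2u$, turning the claim into $\ln\tfrac{1+u}{1-u}\ge 2u$. This is immediate from the series $\ln\tfrac{1+u}{1-u} = 2\sum_{k\ge 0}\tfrac{u^{2k+1}}{2k+1}\ge 2u$, valid for $0<u<1$ (and here $u\in(0,\tfrac13]$). Hence $a_{n+1}\le a_n$ for all $n\ge 1$, so in particular $a_n\le a_1 = e$, which is exactly the upper bound $n!\le e\,n^{n+1/2}e^{-n}$.

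For the lower bound, the sequence $a_n$ is non-increasing and positive, hence converges to some $L\ge 0$, and it suffices to show $L=\sqrt{2\pi}$: then $a_n\ge L=\sqrt{2\pi}$ for every $n$, i.e.\ $n!\ge\sqrt{2\pi}\,n^{n+1/2}e^{-n}$. To identify $L$ I would invoke Wallis' product $\tfrac\pi2 = \lim_{n\to\infty}\tfrac{2^{4n}(n!)^4}{((2n)!)^2(2n+1)}$ (which follows from the recursion for $\int_0^{\pi/2}\sin^m\theta\,d\theta$); substituting $n! = a_n\,n^{n+1/2}e^{-n}$ and $(2n)! = a_{2n}\,(2n)^{2n+1/2}e^{-2n}$ collapses the right-hand side to $\lim_{n\to\infty}\tfrac{a_n^4\,n}{2\,a_{2n}^2(2n+1)} = \tfrac{L^2}{4}$, whence $L=\sqrt{2\pi}$.

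I expect the only step that is not a one-line computation to be the evaluation of the limit $L$ via Wallis' product; the monotonicity of $a_n$ reduces to the series inequality above, and the upper bound then falls out merely by evaluating at $n=1$. Since this is a standard auxiliary fact rather than part of the paper's contribution, an alternative is simply to cite the Wallis product (equivalently, the Gaussian integral $\int_{-\infty}^\infty e^{-x^2/2}\,dx = \sqrt{2\pi}$ together with Laplace's method) as known, and keep only the self-contained monotonicity argument that yields the decreasing sequence $a_n$ and hence the upper bound.
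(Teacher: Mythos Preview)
Your argument is correct: the monotonicity of $a_n := n!\,/\,(n^{n+1/2}e^{-n})$ via the substitution $u=1/(2n+1)$ and the series $\ln\frac{1+u}{1-u}=2\sum_{k\ge 0}u^{2k+1}/(2k+1)$ gives $a_n\le a_1=e$, and the identification of the limit via Wallis' product is computed correctly (the simplification to $L^2/4=\pi/2$ checks out).

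That said, there is nothing to compare against. In the paper this statement is labeled a \emph{Fact} and is simply quoted without proof; it is a standard preliminary used only to bound trinomial coefficients later on. So you have supplied a full, self-contained proof where the paper gives none. If the goal is parity with the paper, a one-line citation to any analysis text suffices; if the goal is completeness, what you wrote is fine and could even be trimmed (e.g.\ keep the monotonicity argument and just cite Wallis' formula for the constant).
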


We do not aim to optimize constants, so we will use the following standard simplified Chernoff bound in our proofs:

\begin{fact}[Chernoff bound]\label{fact:chernoff}
Let $X_1 \ldots X_n$ be independent Bernoulli random variables and $X = \sum_{i=1}^n X_i$ and $\mu = \ex[X]$. Then for $0 < \epsilon < 1$:

$$Pr[X \geq (1+\epsilon)\mu]\leq e^{-\frac{\epsilon^2 \mu}{3}}, \qquad Pr[X \leq (1-\epsilon)\mu] \leq e^{-\frac{\epsilon^2 \mu}{2}}.$$
\end{fact}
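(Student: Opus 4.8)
The plan is the standard exponential-moment (Bernstein--Chernoff) argument: control each tail by applying Markov's inequality to an exponential of $X$, factor the moment generating function using independence, optimize the free parameter, and finish with an elementary real-analysis inequality.

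\textbf{Upper tail.} For any $t > 0$, Markov's inequality applied to the nonnegative random variable $e^{tX}$ gives $Pr[X \geq (1+\epsilon)\mu] = Pr[e^{tX} \geq e^{t(1+\epsilon)\mu}] \leq \ex[e^{tX}] \cdot e^{-t(1+\epsilon)\mu}$. Writing $p_i = Pr[X_i = 1]$ and using independence, $\ex[e^{tX}] = \prod_i \ex[e^{tX_i}] = \prod_i (1 + p_i(e^t-1)) \leq \prod_i e^{p_i(e^t-1)} = e^{\mu(e^t-1)}$, where the inequality is $1+x \leq e^x$ applied termwise and the last step uses $\sum_i p_i = \mu$. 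Hence $Pr[X \geq (1+\epsilon)\mu] \leq \exp\!\paren{\mu(e^t - 1) - t(1+\epsilon)\mu}$; plugging in the minimizer $t = \ln(1+\epsilon) > 0$ of the exponent yields
\[
Pr[X \geq (1+\epsilon)\mu] \leq \paren{\frac{e^\epsilon}{(1+\epsilon)^{1+\epsilon}}}^{\!\mu} = \exp\!\paren{-\mu\big((1+\epsilon)\ln(1+\epsilon) - \epsilon\big)}.
\]
It then remains to check $(1+\epsilon)\ln(1+\epsilon) - \epsilon \geq \epsilon^2/3$ for $0 < \epsilon < 1$. Letting $f(\epsilon)$ denote the difference of the two sides, $f(0) = 0$ and $f'(\epsilon) = \ln(1+\epsilon) - 2\epsilon/3$ satisfies $f'(0) = 0$, $f'(1) = \ln 2 - 2/3 > 0$, and $f''(\epsilon) = \frac{1}{1+\epsilon} - \frac{2}{3}$ is positive on $[0,1/2]$ and negative on $[1/2,1]$; so $f'$ is unimodal on $[0,1]$ and attains its minimum there at an endpoint, giving $f' \geq 0$ throughout, hence $f$ nondecreasing and $f \geq f(0) = 0$.

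\textbf{Lower tail.} Symmetrically, for $t > 0$ apply Markov to $e^{-tX}$: $Pr[X \leq (1-\epsilon)\mu] = Pr[e^{-tX} \geq e^{-t(1-\epsilon)\mu}] \leq \ex[e^{-tX}] e^{t(1-\epsilon)\mu} \leq \exp\!\paren{\mu(e^{-t}-1) + t(1-\epsilon)\mu}$, where $\ex[e^{-tX}] \leq e^{\mu(e^{-t}-1)}$ follows exactly as above. Taking $t = -\ln(1-\epsilon) > 0$ gives
\[
Pr[X \leq (1-\epsilon)\mu] \leq \paren{\frac{e^{-\epsilon}}{(1-\epsilon)^{1-\epsilon}}}^{\!\mu} = \exp\!\paren{-\mu\big(\epsilon + (1-\epsilon)\ln(1-\epsilon)\big)},
\]
so it suffices to verify $\epsilon + (1-\epsilon)\ln(1-\epsilon) \geq \epsilon^2/2$ for $0 < \epsilon < 1$. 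Calling the difference $g(\epsilon)$, one has $g(0) = g'(0) = 0$ and $g''(\epsilon) = \epsilon/(1-\epsilon) \geq 0$ on $[0,1)$, so $g$ is convex with minimum value $0$ at the origin; thus $g \geq 0$, completing the proof.

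\textbf{Main obstacle.} There is essentially no obstacle: the method is entirely routine. The only mildly delicate point is the upper-tail elementary inequality, since the relevant function $f$ is not convex on all of $[0,1]$ (its second derivative changes sign at $\epsilon = 1/2$), so one reasons instead about the monotonicity of $f'$, which is unimodal on $[0,1]$ and hence minimized at an endpoint; the lower-tail function $g$ is convex throughout and requires no such care.
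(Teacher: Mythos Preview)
Your proof is correct and is the standard Cram\'er--Chernoff moment-generating-function argument; the calculus verifications for both tail inequalities are fine as written. Note that the paper itself does not prove this statement at all: it is recorded as a \emph{Fact} (a ``standard simplified Chernoff bound'') and used without justification, so there is no approach to compare against---you have simply supplied a proof where the paper gives none.
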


We will also use the following simplified negative binomial tail bound:

\begin{fact}[Negative binomial tail bound]
Let $X \sim NBinom(n,p)$, i.e. $X$ is a random variable equal to the number of probability $p$ success events needed before $n$ successes are seen. Then for $k > 1$:
$$Pr[X \geq kn/p] \leq e^{-\frac{kn(1-1/k)^2}{2}}. $$
\end{fact}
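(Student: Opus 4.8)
The plan is to reduce the negative binomial upper tail to a binomial \emph{lower} tail and then invoke the Chernoff bound of Fact~\ref{fact:chernoff}. I would think of $X$ as generated by a sequence of independent trials, each a success with probability $p$; then the event $\{X \geq kn/p\}$ says precisely that the $n$-th success has not occurred by trial $kn/p - 1$, i.e.\ that the first $kn/p - 1$ trials contain at most $n-1$ successes. Letting $Y$ count the successes in a block of $kn/p$ such trials, a one-step monotone coupling (adjoining one more $Bern(p)$ trial can only turn ``$\leq n-1$ successes'' into ``$\leq n$ successes'') gives
$$Pr[X \geq kn/p] \;\leq\; Pr[Y \leq n], \qquad Y \sim Binom(kn/p,\, p), \quad \mu := \ex[Y] = kn.$$

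Next I would bound $Pr[Y \leq n]$ directly with the lower-tail half of Fact~\ref{fact:chernoff}. Since $k > 1$, set $\epsilon := 1 - 1/k \in (0,1)$; then $n = \mu/k = (1-\epsilon)\mu$, so Fact~\ref{fact:chernoff} applies and yields $Pr[Y \leq n] = Pr[Y \leq (1-\epsilon)\mu] \leq e^{-\epsilon^2 \mu / 2} = e^{-kn(1-1/k)^2/2}$, which is exactly the claimed bound. If $kn/p$ is not an integer one takes $\lceil kn/p \rceil$ trials instead; this only increases $\mu$ while keeping $n \leq \mu/k$, so the inequality is unchanged, consistent with the rounding convention of Section~\ref{sec:prelim}.

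I do not expect a genuine obstacle in this argument. The only two points that need a little care are the off-by-one when passing from ``$n$-th success after trial $kn/p - 1$'' to a binomial event on $kn/p$ trials, which the single-extra-trial coupling handles cleanly, and verifying that $\epsilon = 1-1/k$ lies in $(0,1)$ so that Fact~\ref{fact:chernoff} is legitimately applicable — guaranteed by the hypothesis $k > 1$.
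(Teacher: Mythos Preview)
Your proposal is correct and follows essentially the same approach as the paper: the paper's proof is the one-line observation that $Pr[X \geq kn/p] = Pr[Binom(kn/p, p) < n]$ followed by a Chernoff bound with $\epsilon = 1 - 1/k$. Your treatment is simply a more careful version of this, making explicit the off-by-one handling and the verification that $\epsilon \in (0,1)$.
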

\begin{proof}
This follows from noticing that $Pr[X \geq kn/p] = Pr[Binom(kn/p, p) < n]$ and applying a Chernoff bound with $\epsilon = 1 - 1/k$.
\end{proof}

We'll chain together these facts to get a tail bound for a binomial number of geometric random variables:

\begin{lemma}\label{lemma:binomofgeo}
Consider $X \sim NBinom(m, q)$ where $m = \sum_{i=1}^t m_i$, $m_i \sim Bern(p_i)$, i.e. $X$ is a random variable obtained by first sampling $m$, the sum of $t$ independent Bernoullis, and then sampling $X \sim NBinom(m, q)$, where $Bern$ and $NBinom$ denote the standard Bernoulli and negative binomial distributions. Then for $1 < k \leq 4$, $\mu = \sum_{i=1}^t p_i$ :

$$Pr\left[X \geq k \cdot \frac{\mu}{q}\right] \leq e^{-\frac{(\sqrt{k}-1)^2 \mu}{3}} + e^{-\frac{k\mu(1-1/\sqrt{k})^2}{2}}.$$

\end{lemma}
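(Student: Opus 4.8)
The plan is to decompose the event $\{X \ge k\mu/q\}$ according to whether $m$ is large or small, with threshold $\sqrt{k}\mu$. That is, I would write
\[
Pr\left[X \ge k \cdot \frac{\mu}{q}\right] \le Pr\left[m \ge \sqrt{k}\mu\right] + Pr\left[X \ge k\cdot\frac{\mu}{q} \,\Big|\, m < \sqrt{k}\mu\right].
\]
The first term is handled by a Chernoff bound (Fact~\ref{fact:chernoff}) applied to $m = \sum_i m_i$, which has mean $\mu$: taking $\epsilon = \sqrt{k}-1$ (which lies in $(0,1]$ precisely because $1 < k \le 4$, so $\sqrt k \le 2$), we get $Pr[m \ge \sqrt{k}\mu] = Pr[m \ge (1+\epsilon)\mu] \le e^{-\epsilon^2\mu/3} = e^{-(\sqrt k - 1)^2\mu/3}$, matching the first summand.

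For the second term, the key observation is monotonicity: $X \sim NBinom(m,q)$ is stochastically increasing in $m$, so conditioned on $m < \sqrt{k}\mu$ we have $Pr[X \ge k\mu/q \mid m] \le Pr[X' \ge k\mu/q]$ where $X' \sim NBinom(\sqrt k\mu, q)$. Now I rewrite the threshold in the form required by the negative binomial tail bound: $k\mu/q = \sqrt{k}\cdot(\sqrt{k}\mu)/q$, so with $n = \sqrt k\mu$ and the multiplier $\sqrt k > 1$ (valid since $k > 1$), the negative binomial tail bound gives
\[
Pr\left[X' \ge \sqrt k \cdot \frac{\sqrt k \mu}{q}\right] \le e^{-\frac{\sqrt k \mu (1 - 1/\sqrt k)^2}{2}}.
\]
Wait — I need this to come out as $e^{-k\mu(1-1/\sqrt k)^2/2}$, so I should instead set $n = \sqrt k \mu$ but note $k\mu/q = k \cdot (\sqrt k \mu/\sqrt k)/q$; more carefully, to get the exponent $k\mu(1-1/\sqrt k)^2/2$ I apply the bound with "$n$" $= \sqrt k\mu$ and multiplier "$k$" $= \sqrt k$ is wrong; rather I use $n = \sqrt k\,\mu$ and want $kn'/p$-form with $n' $ chosen so the exponent is $\tfrac{1}{2}(\text{multiplier})\cdot n' \cdot(1-1/\text{multiplier})^2$. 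Taking multiplier $\sqrt k$ and $n' = \sqrt k \mu$ gives exponent $\tfrac12 \sqrt k \cdot \sqrt k\mu (1-1/\sqrt k)^2 = \tfrac12 k\mu(1-1/\sqrt k)^2$, and threshold $\sqrt k \cdot \sqrt k\mu/q = k\mu/q$, as desired. So the negative binomial tail bound with parameters $(\sqrt k \mu, q)$ and multiplier $\sqrt k$ yields exactly $e^{-k\mu(1-1/\sqrt k)^2/2}$.

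Combining the two bounds gives the claimed inequality. The only subtle points are: (i) verifying the ranges — $\sqrt k - 1 \in (0,1]$ for the Chernoff step needs $k \le 4$, which is exactly the stated hypothesis, and $\sqrt k > 1$ for the negative binomial step needs $k > 1$; and (ii) the stochastic-monotonicity argument letting me replace the conditional $m$ by the fixed upper bound $\sqrt k\mu$ — this is intuitively clear (more required successes means a larger total count, and one can couple $NBinom(m,q)$ instances by using a common infinite sequence of Bernoulli$(q)$ trials) but should be stated explicitly. I don't expect a serious obstacle here; the main thing to get right is bookkeeping the substitution into the negative binomial tail bound so the exponents land in the stated form, and making sure the conditioning/averaging over $m < \sqrt k\mu$ is handled cleanly via the uniform (over such $m$) upper bound.
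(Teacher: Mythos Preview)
Your proposal is correct and follows essentially the same route as the paper: split on the event $\{m \ge \sqrt{k}\mu\}$, apply the Chernoff bound (Fact~\ref{fact:chernoff}) with $\epsilon = \sqrt{k}-1$ to the first piece, and apply the negative binomial tail bound at $m = \sqrt{k}\mu$ (via stochastic monotonicity) to the second. Your bookkeeping is in fact cleaner than the paper's --- you explicitly note that $k \le 4$ is exactly what makes $\epsilon \le 1$ so the stated Chernoff form applies, and your substitution into the negative binomial bound correctly yields the exponent $\tfrac{1}{2}k\mu(1-1/\sqrt{k})^2$ matching the lemma statement (the paper's proof text writes $\sqrt{k}\mu$ there, which appears to be a typo).
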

\begin{proof}
We consider two cases for the realization of $m$ and apply tail bounds to each case:

$$Pr\left[X \geq k \cdot \frac{\mu}{q}\right] = Pr\left[X \geq k \cdot \frac{\mu}{q} \land m \geq \sqrt{k}\mu\right] + Pr\left[X \geq k \cdot \frac{\mu}{q} \land m < \sqrt{k}\mu\right]$$
$$\leq Pr\left[m \geq \sqrt{k}\mu\right] + Pr\left[X \geq k \cdot \frac{\mu}{q} | m \leq \sqrt{k}\mu\right].$$

A Chernoff bound gives $Pr[m \geq \sqrt{k}\mu] \leq e^{-\frac{(\sqrt{k}-1)^2 \mu}{3}}$, the negative binomial tail bound  (and noticing that $Pr[X \geq k \cdot \frac{\mu}{q} | m \leq \sqrt{k}\mu]$ is maximized when $m = \sqrt{k}\mu$) gives $Pr[X \geq k \cdot \frac{\mu}{q} | m < \sqrt{k}\mu] \leq e^{-\frac{\sqrt{k}\mu(1-1/\sqrt{k})^2}{2}}$, giving the lemma.
\end{proof}

\subsection{Definitions}
In this section we give definitions that simplify the presentation. There are many identical definitions for solutions to the edit distance problem - we will define solutions as paths through the dependency graph as doing so simplifies the presentation of the analysis.

\begin{definition}\label{def:ed}
Consider the \textbf{dependency graph} of the edit distance dynamic programming table: For two strings $s_1, s_2$ of length $n_1, n_2$, the dependency graph of $s_1, s_2$ has vertices $(i, j)$ for $i \in \{0, 1, \ldots n_1\}, j \in \{0, 1, \ldots n_2\}$ and directed edges from $(i, j)$ to $(i+1, j), (i, j+1)$ and $(i+1, j+1)$ for $i \in [n_1], j \in [n_2]$ if these vertices exist. The edges $\{(i-1, j-1), (i, j)\}$ where $(s_1)_i = (s_2)_j$ have weight 0 and all other edges have weight 1. The \textbf{edit distance} between $s_1$ and $s_2$, denoted $ED(s_1, s_2)$, is the (weighted) shortest path from $(0, 0)$ to $(n_1, n_2)$. 
\end{definition}

For completeness we recall the standard dynamic programming algorithm for edit distance and its restriction to a subset of indices.
\begin{fact}[Textbook Algorithm]
The edit distance between $s_1, s_2$ of length $n_1, n_2$ can be computed in $O(n_1 n_2)$ time by using e.g. the $O(|V| + |E|)$-time\footnote{Note that the dependency graph has $|E| = O(|V|)$.} dynamic programming algorithm for shortest paths in a DAG. In addition, if we know the shortest path in the dependency graph is contained in vertex set $V'$, we can compute the edit distance in $O(|V'|)$ time by applying the dynamic program ``restricted to $V'$'' i.e. by applying it to the dependency graph after deleting all vertices not in $V'$.
\end{fact}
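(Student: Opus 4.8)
The plan is to verify three things: that the dependency graph of Definition~\ref{def:ed} is a DAG whose single-pair shortest path can be found in time linear in its size, that this size is $O(n_1 n_2)$, and that the shortest-path value coincides with the Levenshtein distance; the restricted claim then follows by applying the same argument to an induced subgraph.

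First I would observe that every edge goes from $(i,j)$ to a vertex whose coordinate sum $i+j$ is strictly larger, so ordering the vertices by $i+j$ (breaking ties arbitrarily, e.g.\ lexicographically in $(i,j)$) gives a topological order and the graph is acyclic. The standard DAG shortest-path dynamic program processes vertices in topological order, setting $d(0,0) = 0$ and, upon reaching $(i,j)$, setting $d(i,j)$ to the minimum over incoming edges $\big((i',j'),(i,j)\big)$ of $d(i',j') + w\big((i',j'),(i,j)\big)$; this does $O(1)$ work per edge, hence $O(|V|+|E|)$ total. Since each vertex has at most three outgoing edges (to $(i+1,j)$, $(i,j+1)$, $(i+1,j+1)$), we have $|E| \le 3|V|$, and $|V| = (n_1+1)(n_2+1) = O(n_1 n_2)$, giving the claimed $O(n_1 n_2)$ bound.

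Next I would argue $d(n_1,n_2) = ED(s_1,s_2)$ via the usual correspondence between directed paths from $(0,0)$ to $(n_1,n_2)$ and edit sequences transforming $s_1$ into $s_2$: a horizontal step $(i,j)\to(i,j+1)$ corresponds to inserting $(s_2)_{j+1}$, a vertical step $(i,j)\to(i+1,j)$ to deleting $(s_1)_{i+1}$, and a diagonal step $(i,j)\to(i+1,j+1)$ to aligning $(s_1)_{i+1}$ with $(s_2)_{j+1}$, which is free (weight $0$) exactly when those characters are equal and is a substitution (weight $1$) otherwise. Any such path consumes all of $s_1$ and produces all of $s_2$, and its total weight is precisely the number of non-free edits it performs; conversely, any edit sequence of Levenshtein cost $c$ yields such a path of weight $c$. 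Hence the minimum path weight equals the minimum number of edits, i.e.\ $ED(s_1,s_2)$, which is what the DP outputs.

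Finally, for the restricted claim I would note that the subgraph induced on $V'$ is still a DAG (an induced subgraph of a DAG is acyclic) and inherits the same topological order, so the identical DP runs correctly on it in time $O(|V'| + |E'|) = O(|V'|)$ using $|E'| \le 3|V'|$. Provided $(0,0), (n_1,n_2) \in V'$, every $(0,0)$-to-$(n_1,n_2)$ path in the restricted graph is also a path in the full graph, so the restricted shortest-path value is at least $ED(s_1,s_2)$; and by hypothesis $V'$ contains some shortest path of the full graph, which survives the restriction, so the restricted value is also at most $ED(s_1,s_2)$, hence equal. The only point requiring care is this last one --- that $V'$ is assumed to contain an optimal path (in particular the two endpoints), since otherwise restricting could strictly increase the computed value; everything else is the textbook argument, so I do not expect any genuine obstacle.
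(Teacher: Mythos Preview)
Your proof is correct. The paper does not prove this statement at all: it is labeled a \emph{Fact} and simply cites the textbook dynamic program (e.g., Section~6.3 of \cite{DasguptaPV08}) without argument, so your write-up supplies details the paper deliberately omits. The content you give---topological order via $i+j$, the path/edit-sequence bijection, and the observation that the induced subgraph on $V'$ is still a DAG with $|E'|=O(|V'|)$---is exactly the standard justification, and your caveat that the restricted claim needs $(0,0),(n_1,n_2)\in V'$ (implicit in ``contains the shortest path'') is the only subtlety worth flagging.
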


\begin{definition}
An \textbf{alignment} (of two strings $s_1, s_2$) is any path $A = \{(i_1 = 0, j_1 = 0), (i_2, j_2) \ldots (i_{L-1}, j_{L-1}), (i_L= n_1, j_L = n_2)\}$ from $(0, 0)$ to $(n_1, n_2)$ in the dependency graph of $s_1, s_2$. Denote the set of all alignments by $\mathcal{A}$. 
\end{definition}

For convenience, we will abuse notation and sometimes use $A$ to also denote the cost of alignment $A$, e.g. using $A \geq A'$ to denote that the cost of $A$ is at least the cost of $A'$. 

\begin{definition}
For $\samp{n}$, the \textbf{canonical alignment} of $s_1, s_2$, denoted $A^*$, is informally the alignment corresponding to $\mathcal{E}$. More formally, $A^*$ starts at $(0, 0)$, and for each row $i$ of the dependency graph, if the first vertex in $A^*$ in this row is $(i,j)$, we extend $A^*$ as follows according to $\cond$:
\begin{itemize}
    \item If no insertion or deletion occurs on the $i$th bit, we include the edge $\{(i, j), (i+1, j+1)\}$.
    \item If an insertion of $I$ bits occurs on the $i$th bit and no deletion occurs, we include the path $\{(i, j), (i, j+1), \ldots (i,j+I), (i+1, j+I+1)\}$
    \item If a deletion and no insertion occurred on the $i$th bit, we include the edge $\{(i, j), (i+1, j)\}$. \item If an insertion of $I$ bits occurred and a deletion, we include the path $\{(i, j), (i, j+1), \ldots (i,j+I), (i+1, j+I)\}$.
\end{itemize}
\end{definition}

The definition of (canonical) alignments depends on the pair of strings $s_1, s_2$, but throughout the paper usually it will be clear that the pair of strings being referred to is sampled from $ID(n)$, so for brevity's sake we may refer to a canonical alignment without referring to strings, letting the strings be implicit.

Note that the canonical alignment is not necessarily the optimal alignment (in fact, even in the substitution-only case, the substitutions cause the optimal alignment to be one including insertions and deletions with high probability). However, alignments which differ sufficiently from the canonical alignment should not perform better than the canonical alignment with high probability. For alignments which aren't the canonical alignment, we characterize their differences from the canonical alignment in terms of where they break from the canoncial alignment.

\begin{definition}
Fix a canonical alignment $A^*$, and let $A$ be any alignment. A \textbf{break} of $A$ (from $A^*$) is any subpath $(\{(i_1, j_1), (i_2, j_2) \ldots (i_L, j_L)\})$ of $A$ such that $(i_1, j_1)$ and $(i_L, j_L)$ are in $A^*$ but none of $(i_2, j_2)$ to $(i_{L-1}, j_{L-1})$ are in $A^*$. The \textbf{length} of the break is the value $i_L - i_1$.

For $\samp{n}$, a break of alignment $A$ from $(i_1, j_1)$ to $(i_L, j_L)$ is \textbf{long} if its length is at least $k \ln n$ (for a constant $k$ to be specified later) and \textbf{short} otherwise\footnote{Note that in the definition of length, we use $i_L - i_1$ and ignore $j_1, j_L$. This is because with high probability, for all $i, i'$ such that $i' > i + k \ln n$, if the canonical alignment goes through $(i, j)$ and $(i', j')$, $j' - j$ will be within a constant factor of $i' - i$. So defining length as $i_L - i_1$ instead of $j_L - j_1$ will not substantially affect our categorization of which breaks are short or long.}. An alignment is \textbf{good} if it has no long breaks and \textbf{bad} if it has at least one long break.
\end{definition}
Intuitively, short breaks are smaller and might make an alignment better than the canonical alignment, so we can't rule out alignments containing only short breaks in our analysis. On the other hand, long breaks are sufficiently large such that replacing them with the corresponding part of the canonical alignment should be an improvement with high probability. Lastly, we define two functions that take alignments and make them look more like the canonical alignment $A^*$.

\begin{definition}[Short and Long Break Replacement]
We define $\SR:\mathcal{A}\mapsto \mathcal{A}$ as a function from alignments to alignments, such that for any alignment $A$, $\SR(A)$ is the alignment arrived at by applying the following modification to all short breaks in $A$: For a short break from $(i_1, j_1)$ to $(i_L, j_L)$, replace it with the subpath of $A^*$ from $(i_1, j_1)$ to $(i_L, j_L)$. We define $\LR$ analogously, except $\LR$ applies the modification to all long breaks instead of short breaks.
\end{definition}

Note that all alignments in the range of $\LR$ are good by definition. The idea behind these functions and the definitions of good and bad alignments is to use them in the analysis as follows: It is possible to compute the best of the good alignments quickly by only considering a narrow region within the DP table. So it suffices to show any bad alignment is not the best alignment. For a single bad alignment $A$, it is fairly straightforward to show that $A^*$ is better than $A$ with high probability. 
However, there are many bad alignments and thus a simple union bound does not suffice to complete the analysis. We instead use $\LR$ to show that it suffices if all alignments in the range of $\SR$ are not better than $A^*$ with high probability. There are considerably fewer of these alignments and they can be partitioned in a way that is easy to analyze, and so simple counting and probability techniques let us show this holds.
\section{Substitution-Only Case}\label{sec:subonly}

As a warmup, let's consider the easier case when only substitutions are present in the indel channel. In this case, $A^*$ is just the diagonal $\{(0, 0), (1, 1) \ldots (n, n)\}$. We show the following theorem:

\begin{theorem}\label{thm:subonly}
For $\samp{n}$ with $p_i, p_d = 0$, as long as $p_s \leq \rho_s$ where $\rho_s = .028$, there is an $O(n \ln n)$ time algorithm for calculating $ED(s_1, s_2)$ which is correct with probability $1 - n^{-\Omega(1)}$.
\end{theorem}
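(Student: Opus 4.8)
The plan is to follow the roadmap laid out in the definitions section, specialized to the substitution-only case where $A^*$ is simply the diagonal. The algorithm is: compute the DP table restricted to the band of vertices $(i,j)$ with $|i-j| \le c \ln n$ for a suitable constant $c$, which takes $O(n \ln n)$ time; output the resulting value. Correctness amounts to showing that, with probability $1 - n^{-\Omega(1)}$, the optimal alignment lies entirely within this band, equivalently that the best \emph{good} alignment (no long breaks, hence staying within $O(\ln n)$ of the diagonal up to a constant) is no worse than any \emph{bad} alignment. Here a break of length $\ell$ corresponds to a subpath leaving and returning to the diagonal; since it starts at some $(i_1,i_1)$ and ends at $(i_L,i_L)$ with $i_L - i_1 = \ell$, it uses at least $\ell$ non-zero-weight edges (it cannot use the free diagonal matches of $A^*$ by definition of a break), so its cost is at least... actually more carefully, a break must deviate from the diagonal and come back, so it contains at least two "off-diagonal" steps and its cost is at least $2$ plus whatever matching failures occur; the key point is that long breaks are expensive.

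The core probabilistic step: fix a single bad alignment $A$ and let $A' = \LR(A)$, which replaces every long break of $A$ by the corresponding diagonal segment of $A^*$. I would show $\Pr[A' \le A]$ is large, i.e. that replacing a long break is an improvement with high probability, with a failure probability exponentially small in the break length. The intuition is that over a diagonal stretch of length $\ell = k\ln n$, the diagonal portion of $A^*$ costs only the number of substitutions there, which is $\approx p_s \ell$ and is $\le \epsilon \ell$ with probability $1 - e^{-\Omega(\ell)}$ by a Chernoff bound; meanwhile any alignment subpath that stays off the diagonal must either use $\ge \ell$ indel-type (weight-1, non-match) edges or realign two genuinely unrelated stretches of random bits, which match only with probability $\approx 1/2$ per position, so it incurs cost $\ge (1/2 - \epsilon')\ell$ with probability $1 - e^{-\Omega(\ell)}$ (using independence of the random bits $s_1$, and that $s_2$ in this case is just $s_1$ with independent flips). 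Choosing $k$ large enough that $e^{-\Omega(k\ln n)} = n^{-\omega(1)}$ makes a single long break irreparably bad w.h.p.

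Then instead of union-bounding over all (exponentially many) bad alignments, I would use the two-step reduction indicated in the excerpt. First, $\LR$ shows it suffices to prove no alignment in the range of $\SR$ beats $A^*$: given any bad $A$, $\LR(A)$ is good and $\SR(\LR(A)) = \SR(A)$ up to the already-analyzed long-break swaps, so controlling alignments with only short breaks plus the single-break estimate above suffices. Second, partition the set of short-break alignments (equivalently, alignments in the range of $\SR$, which are determined by their set of breakpoints on the diagonal and the local structure within each short break) into classes $\mathcal{B}$ where within each class the breaks occupy the same diagonal positions; the number of such classes, and the number of short breaks, can be bounded so that $|\mathcal{B}|$ is at most exponential in $n/\ln n$ times a polynomial, while the probability that \emph{all} alignments in one class beat $A^*$ is smaller than $|\mathcal{B}|^{-1} n^{-\Omega(1)}$ — this works because a short break of length $\ell$ is itself bad with probability $e^{-\Omega(\ell)}$, and summing $\ell$'s in a class against the count of classes with that break-length profile yields a favorable union bound, exactly the "number of subsets not too large, each subset bad w.h.p." structure described in the introduction.

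The main obstacle I expect is the counting/union-bound bookkeeping in the last paragraph: one must define the partition $\mathcal{B}$ precisely, bound $|\mathcal{B}|$ (this is where trinomial/Stirling estimates enter — counting ways to place breakpoints and distribute break lengths), and show the per-class failure probability decays fast enough to beat $|\mathcal{B}|$, which requires the per-break exponential decay rate to dominate the per-break combinatorial entropy — this forces the small-constant bound $\rho_s = .028$ on the substitution probability. The single-alignment and single-break probabilistic estimates, by contrast, are routine Chernoff-bound arguments once set up correctly; the delicate part is making the two-level reduction ($\LR$ then $\SR$ then partition) airtight so that controlling the relatively few short-break alignments genuinely implies optimality of a band-restricted DP.
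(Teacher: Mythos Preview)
Your high-level plan matches the paper's: band-restricted DP for the algorithm, then the $\LR$/$\SR$ reduction, then a partition-plus-union-bound to control the range of $\SR$. But two points in your write-up are genuinely inverted and would derail the argument as stated.

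First, the range of $\SR$ consists of alignments whose \emph{only} breaks are \emph{long} breaks, not short ones: $\SR$ replaces every short break by the corresponding diagonal segment, so what survives is exactly the long-break structure. Your repeated references to ``short-break alignments'' and ``a short break of length $\ell$ is itself bad'' in the partition step suggest you have this backwards. The object you must analyze is: an alignment that agrees with $A^*$ except on a collection of intervals each of length $\ge k\ln n$.

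Second, and more importantly, you have misplaced the heavy counting. The partition $\mathcal{B}$ in the paper is by \emph{breakpoint configuration}: just the set of start/end indices of the long breaks. That count is easy --- at most $(nk\ln n+1)^i$ configurations with total break length $\approx ik\ln n$ --- and no trinomials or Stirling are needed there. The hard step, where the trinomial and Stirling bounds actually appear and where the constant $\rho_s=.028$ is forced, is \emph{inside} a single class: once the break endpoints are fixed, there are still exponentially many ways to fill in each break off the diagonal, and you must show that \emph{all of them simultaneously} cost more than the diagonal on that region, with failure probability $e^{-\Omega(\ell)}$ in the region length $\ell$. Your per-class argument (``a short break of length $\ell$ is itself bad with probability $e^{-\Omega(\ell)}$'') only handles one alignment at a time; you need the paper's Lemma~\ref{lemma:offdiag}, which union-bounds over all off-diagonal paths on a length-$\ell$ region by grouping them by number of deletions, counting via $\binom{\ell+d}{d,d,\ell-d}$, and balancing that against the Chernoff bound on $\mathrm{Binom}(\ell-d,1/2)$. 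Without that simultaneous bound, the per-class failure probability is not small enough to beat even the modest breakpoint-configuration count.
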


The algorithm is simple - compute entries of the canonical DP table indexed by $(i, j)$ where $|i - j| \leq k \ln n$, ignoring dependencies on entries for which $|i - j| > k \ln n$. The value of $k$ used in the algorithm and the definition of long breaks will be specified by the analysis, which will determine a lower bound for $k$ needed to make the failure probability sufficiently small. 

We start by showing that ``off-diagonal'' alignments, i.e. alignments which do not share any edges with $A^*$, are not better than $A^*$ with high probability. While there are many bad alignments which are not entirely off-diagonal, this will be useful as later we can show that a bad alignment $A$ in the range of $\SR$ being better than $A^*$ corresponds to an off-diagonal alignment being better than $A^*$ in a subproblem.

\begin{lemma}\label{lemma:offdiag}
For $\samp{n}$, with probability $1-e^{-\Omega(n)}$, $A > A^*$ for all alignments $A$ such that $A$ and $A^*$ do not share any edges.
\end{lemma}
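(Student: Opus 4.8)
The plan is to bound the cost of any off-diagonal alignment $A$ from below and the cost of $A^*$ from above, and show the two bounds separate with exponentially small failure probability, after which a union bound over all off-diagonal alignments finishes the argument. Since we are in the substitution-only case, $A^*$ is the diagonal, and its cost is exactly the number of substituted positions, which is $\mathrm{Binom}(n, p_s)$; by a Chernoff bound this is at most $(1+\epsilon) p_s n$ except with probability $e^{-\Omega(n)}$. So it suffices to show that every off-diagonal alignment has cost at least, say, $c n$ for a constant $c > (1+\epsilon)p_s$, except with probability $e^{-\Omega(n)}$.

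First I would set up the counting. An off-diagonal alignment shares no edge with the diagonal, so it consists entirely of non-diagonal moves (horizontal/vertical steps, which each cost $1$) and diagonal steps $(i-1,j-1)\to(i,j)$ with $i \neq j$ that are ``free'' only when $(s_1)_i = (s_2)_j$. An alignment of two length-$n$ strings that uses $h$ horizontal and $v$ vertical steps uses $n - \max(h,v)$... more carefully, if it uses $a$ insertions (horizontal) and $b$ deletions (vertical) then it has $a = b$ (to return to the diagonal endpoint $(n,n)$) and $n - a$ diagonal steps, with cost at least $2a$ plus the number of mismatched diagonal steps. I would parametrize an alignment by its sequence of moves; the key quantities are the total number of non-diagonal steps $2a$ and, among the $n-a$ diagonal steps, how many land on matching characters. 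For a fixed off-diagonal alignment, each of its diagonal steps compares $(s_1)_i$ to $(s_2)_j$ with $i \neq j$; because $s_1$ is uniformly random and, crucially, $s_2$ in the substitution-only model is $s_1$ with independent flips, an entry $(s_2)_j$ with $j \neq i$ is independent of $(s_1)_i$ and uniform, so each such comparison is a match independently with probability exactly $1/2$. Hence the cost of a fixed off-diagonal alignment with $a$ insertions stochastically dominates $2a + \mathrm{Binom}(n-a, 1/2)$, which by a Chernoff bound is at least $2a + (n-a)(1/2 - \epsilon) \geq (1/2 - \epsilon) n$ except with probability $e^{-\Omega(n-a)}$; combined with the $2a$ term this is $\Omega(n)$ in all cases, and in fact at least roughly $n/2$.

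Next I would run the union bound. The number of alignments of two length-$n$ strings is at most $3^{2n}$ (at each of at most $2n$ steps, one of three moves), or more tightly the number of lattice paths, which is still $2^{O(n)}$. Since each fixed off-diagonal alignment fails to have cost $\geq (1/2 - \epsilon) n$ only with probability $e^{-\Omega(n)}$, I need the constant in that exponent to beat the $\ln(\text{number of alignments}) = O(n)$ from the union bound. This forces a careful choice: the bad event is that a diagonal-step set of size $n - a$ has fewer than $(1/2 - \epsilon)(n-a)$ matches, which has probability $e^{-\epsilon^2 (n-a)/2 \cdot (1/2)}$ roughly $e^{-\epsilon^2(n-a)/4}$; this is only small when $n - a$ is large, i.e. when $a$ is not too close to $n$. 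But when $a$ is close to $n$ the alignment already has cost $2a \approx 2n$, which is way above $(1+\epsilon)p_s n$ for small $p_s$, so those alignments are handled trivially without any randomness. So I would split into two regimes: alignments with $a \geq \delta n$ (cost $\geq 2\delta n > (1+\epsilon)p_s n$ deterministically, provided $p_s < 2\delta/(1+\epsilon)$), and alignments with $a < \delta n$ (at least $(1-\delta)n$ diagonal steps, Chernoff gives match-count $\geq (1/2-\epsilon)(1-\delta)n$ except with probability $e^{-\Omega(n)}$, and I need $3^{2n} \cdot e^{-\Omega(n)} \to 0$, which pins down how small $\rho_s = 0.028$ and the various constants must be).

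\textbf{Main obstacle.} The delicate part is making the union bound actually close: the number of off-diagonal alignments is exponential in $n$ with a constant ($\ln 3 \cdot 2 \approx 2.2$, or less with a tighter lattice-path count) that must be dominated by the Chernoff exponent $\Theta(\epsilon^2 n)$ coming from ``a constant fraction of the $\Theta(n)$ diagonal comparisons are mismatches with probability $1/2$''. Since $\epsilon$ can be taken close to $1/2$ (we only need the match fraction bounded away from $1$, not close to $1/2$), the Chernoff exponent can be pushed up to roughly $(\ln 2) n$ or so, which does beat the path count — but one has to be honest about the interplay between $\epsilon$, the threshold cost, the allowed range of $a$, and the resulting constraint on $\rho_s$. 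A secondary subtlety is the independence claim: one must verify that for an off-diagonal alignment, the multiset of comparisons $\{((s_1)_i, (s_2)_j) : i \neq j\}$ along its diagonal steps behaves like independent fair coins; this is clean in the substitution-only model because $(s_2)_j$ is a function of $(s_1)_j$ and an independent flip, hence independent of and uniform given $(s_1)_i$ for $i \neq j$ — though one should be slightly careful that within a single alignment no index of $s_1$ or $s_2$ is reused, which holds since the path is monotone.
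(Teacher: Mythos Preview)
Your approach is structurally the same as the paper's---upper-bound $A^*$ via Chernoff, observe that every off-diagonal diagonal step is a fresh fair coin, then union-bound over off-diagonal alignments---but the union bound as you have written it does not close. You bound the number of alignments by $3^{2n}$ (or, more tightly, by the Delannoy number $D(n,n)\sim(3+2\sqrt2)^n$, so exponent $\approx 1.76$), while the Chernoff tail for $\mathrm{Binom}(m,\tfrac12)$ is at best $2^{-m}$, giving exponent at most $(\ln 2)(n-a)\le(\ln 2)n\approx 0.69n$. Since $0.69<1.76$, the product diverges no matter how you tune $\epsilon$, $\delta$, or $\rho_s$; your claim that ``the Chernoff exponent can be pushed up to roughly $(\ln 2)n$, which does beat the path count'' is false.

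The fix---and this is the content of the paper's argument---is that the split by $a$ you already introduced must be used inside the \emph{counting}, not just the probability estimate. Alignments with $a>cn/2$ (where $c=\tfrac32\rho_s$) are dispatched deterministically since their cost exceeds $2a>cn$. For $a\le cn/2$, the number of alignments with exactly $a$ insertions and $a$ deletions is the trinomial $\binom{n+a}{a,a,n-a}$; maximizing over $a\le cn/2$ and applying Stirling gives a base that tends to $1$ as $c\to 0$. Pairing this with $\Pr[\mathrm{Binom}((1-\tfrac c2)n,\tfrac12)\le cn]\le\exp\bigl(-\tfrac{(2-5c)^2}{8(2-c)}n\bigr)$ (exponent $\approx n/4$ for small $c$) and the factor $cn/2$ for the number of values of $a$ yields an expression that is $e^{-\Omega(n)}$ precisely when $c\le 0.042$, i.e.\ $\rho_s\le 0.028$. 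In short: you already have the right decomposition by $a$, but you must carry the constraint $a\le cn/2$ into the alignment count via the trinomial rather than defaulting to the global $3^{2n}$; otherwise the argument cannot succeed for any value of $\rho_s$.
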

\begin{proof}
The cost of $A^*$ can be upper bounded using a Chernoff bound: The expected number of substitutions is at most $\rho_s n$, so Fact~\ref{fact:chernoff} gives 

$$Pr\left[A^* \leq \frac{3}{2}\rho_s n\right] \leq 1 - e^{-\frac{\rho_s n}{12}}.$$

Now our goal is to show that with high probability, no alignment $A$ that does not share edges with $A^*$ has cost lower than $cn$ (where $c=\frac{3}{2}\rho_s$). We achieve this using a union bound over alignments, grouping alignments by their number of deletions $d$ (which in the substitution-only case is also the number of insertions). We can ignore alignments with more than $cn/2$ deletions, as they will of course have cost more than $cn$. 

\begin{align*}
  Pr[\exists A, A \leq cn] 
  &\leq \sum_{d=1}^{cn/2}\sum_{ A\text{ with } d \text{ deletions}}Pr[A \leq cn]\\
  &\leq \sum_{d=1}^{cn/2} \binom{n+d}{d,d,n-d} Pr\left[Binom(n-d,\frac{1}{2}) \leq cn - 2d\right]\\
  &\leq \frac{cn}{2}\binom{(1+\frac{c}{2})n}{\frac{c}{2}n,\frac{c}{2}n,(1-\frac{c}{2})n} Pr\left[Binom((1-\frac{c}{2})n,\frac{1}{2})\leq cn\right].
\end{align*}
The second line counts the number of alignments with $d$ deletions, and it expresses the probability of success in terms of the number of substitutions, or edges in $A$ of the form $((i-1,j-1),(i,j))$: The cost of each off-diagonal edge of the form $((i-1,j-1),(i,j))$ is $Bern(\frac{1}{2})$, even if we condition on the cost of all previous edges in $A$: assuming wlog that $i > j$ knowing the costs of all edges before $((i-1,j-1),(i,j))$ in $A$ gives no information about the bit $i$ of $s_1$, which is distributed uniformly at random. So the total cost of these edges is given by $Binom((1-\frac{c}{2})n,\frac{1}{2})$. In the third line we upper bound the probability for simplicity. A Chernoff bound now gives:
\begin{align} 
  Pr\left[Binom((1-\frac{c}{2})n,\frac{1}{2})\leq cn\right] 
  &= Pr\left[Binom((1-\frac{c}{2})n,\frac{1}{2})\leq(1-\frac{2-5c}{2-c}) \frac{1}{2}(1-\frac{c}{2})n\right]\nonumber\\
  &\leq \exp\left(-\frac{(2-5c)^2}{8(2-c)}n\right). \label{eqn:sub-chernoff}
\end{align}

Next we upper bound the trinomial using Stirling's approximation:

\begin{align}
    \binom{(1+\frac{c}{2})n}{\frac{c}{2}n,\frac{c}{2}n,(1-\frac{c}{2})n}
    &\leq   \frac{e}{(2\pi)^{3/2}}
            \frac{((1+\frac{c}{2})n)^{(1+\frac{c}{2})n+\frac{1}{2}}}
            {(\frac{c}{2}n)^{cn+1} ((1-\frac{c}{2})n)^{(1-\frac{c}{2})n +\frac{1}{2}}}
            \nonumber\\
    &\leq   \frac{e}{(2\pi)^{3/2}}
            \frac{2}{cn}\sqrt{\frac{2+c}{2-c}}
            \left[
               \frac{(1+\frac{c}{2})^{(1+\frac{c}{2})}}{(\frac{c}{2})^c (1-\frac{c}{2})^{(1-\frac{c}{2})}} 
            \right]^n.
            \label{eqn:tri-stirling}
\end{align}

Putting everything together, we have the following upper bound

\begin{align*}
    Pr[\exists A, A \leq cn] 
    &\leq \frac{e}{(2\pi)^{3/2}}\frac{2}{cn}\sqrt{\frac{2+c}{2-c}}
    \left[
        \frac{(1+\frac{c}{2})^{(1+\frac{c}{2})}}{(\frac{c}{2})^c (1-\frac{c}{2})^{(1-\frac{c}{2})}} 
    \right]^n 
    \left[
        \exp\left(-\frac{(2-5c)^2}{8(2-c)}\right)
    \right]^n.
\end{align*}

For the above bound to be exponentially decaying in $n$, we need that:
\begin{align}
    \frac{(1+\frac{c}{2})^{(1+\frac{c}{2})}}{(\frac{c}{2})^c (1-\frac{c}{2})^{(1-\frac{c}{2})}}
    \exp\left(-\frac{(2-5c)^2}{8(2-c)}\right)
    &< 1\label{eqn:bound-c},
\end{align}

which holds as long as $c \leq 0.042$, i.e. $\rho_s \leq .028$. For these values of $c$, with high probability $A^*< cn$ and $A > cn$ for any $A$ which does not share any edges with $A^*$.
\end{proof}

We now make the following observations which will allow us to apply Lemma~\ref{lemma:offdiag} to make more powerful statements about the set of all alignments:

\begin{fact}\label{fact:lbr}
Fix any $s_1, s_2, \cond$ in the support of $ID(n)$, and let $A, A'$ be any two alignments with the same set of long breaks. Then $\LR(A) - A = \LR(A') - A'$.
\end{fact}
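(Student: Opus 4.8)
The plan is to unwind the definitions of $\LR$ and of "break," and observe that $\LR$ acts on an alignment purely locally, one break at a time, with the replacement it performs depending only on the endpoints of the long break (and on the fixed data $s_1, s_2, \cond$, which determine $A^*$). First I would recall that any alignment $A$ decomposes canonically into a sequence of maximal subpaths that alternate between lying on $A^*$ and being breaks: concretely, the vertices of $A$ that lie on $A^*$ partition $A$ into break segments, and between consecutive such vertices $A$ either follows $A^*$ (a trivial "zero-length" stretch) or departs from it (a break). Crucially, the cost and the vertex set of the part of $\LR(A)$ coming from a single long break from $(i_1,j_1)$ to $(i_L,j_L)$ is, by definition of $\SR$/$\LR$, exactly the subpath of $A^*$ from $(i_1,j_1)$ to $(i_L,j_L)$ — this depends only on the pair of endpoints $(i_1,j_1),(i_L,j_L)$, not on how $A$ traversed that break.

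The key step is then: since $A$ and $A'$ have the same set of long breaks, write $\mathcal{L} = \{\beta_1, \dots, \beta_m\}$ for this common set, where each $\beta_r$ is specified by its endpoint pair $((i_1^r,j_1^r),(i_L^r,j_L^r))$. For each $r$, let $P_r^*$ be the subpath of $A^*$ between those endpoints. Then
$$
\LR(A) - A = \sum_{r=1}^m \big( \mathrm{cost}(P_r^*) - \mathrm{cost}(\beta_r \text{ in } A) \big),
$$
because $\LR$ replaces each $\beta_r$ by $P_r^*$ and leaves every edge of $A$ outside the long breaks untouched; the replacements are on disjoint row-intervals $[i_1^r, i_L^r]$ (two distinct maximal breaks cannot overlap in rows except possibly at a shared endpoint vertex, which contributes no edge), so the cost change is additive over $r$. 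But "$\beta_r$ as traversed in $A$" and "$\beta_r$ as traversed in $A'$" are literally the same subpath — that is what it means for $A$ and $A'$ to have the same long breaks — so $\mathrm{cost}(\beta_r \text{ in } A) = \mathrm{cost}(\beta_r \text{ in } A')$, and likewise $P_r^*$ is the same for both. Hence $\LR(A) - A = \sum_r (\mathrm{cost}(P_r^*) - \mathrm{cost}(\beta_r)) = \LR(A') - A'$.

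I expect the only real subtlety — the "main obstacle," though it is minor — is pinning down the bookkeeping so that the difference $\LR(A) - A$ genuinely splits as a sum over long breaks with no cross-terms: one must check that the non-break portions of $A$ (the stretches that already coincide with $A^*$, and the short breaks) are carried over verbatim into $\LR(A)$, and that the row-intervals of distinct long breaks are interior-disjoint so that "replace each long break independently" is well-defined and the total cost is the sum of the local costs. This follows directly from the definition of a break (maximal subpath whose interior vertices avoid $A^*$, with endpoints on $A^*$) and the fact that $\LR$ is defined to modify all long breaks simultaneously by these endpoint-determined replacements; once that is observed, the equality is immediate and purely formal.
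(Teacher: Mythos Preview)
Your proposal is correct and takes essentially the same approach as the paper, which dispatches the fact in a single sentence by observing that since $A$ and $A'$ have the same long breaks (i.e., the same subpaths), applying $\LR$ swaps in the same pairs of subpaths and hence produces the same change in cost. Your write-up simply unpacks this observation with more explicit bookkeeping about additivity over disjoint breaks, which is fine but not needed at the level of detail the paper operates at.
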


This follows because applying $\LR$ to $A, A'$ results in the same pairs of subpaths being swapped (and thus the same change in cost) as $A, A'$ have the same long breaks.

\begin{corollary}\label{corollary:sbrrange}
Fix any $(s_1, s_2, \cond)$ in the support of $ID(n)$. If for all alignments $A$ in the range of $\SR$, $A \geq A^*$, then any lowest-cost good alignment is also a lowest-cost alignment.
\end{corollary}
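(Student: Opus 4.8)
The plan is to prove Corollary~\ref{corollary:sbrrange} by contradiction: suppose that every alignment in the range of $\SR$ costs at least $A^*$, but there is a lowest-cost alignment $A_{\mathrm{opt}}$ that is strictly cheaper than every good alignment (equivalently, $A_{\mathrm{opt}}$ is not good, so it has at least one long break). The goal is to manufacture from $A_{\mathrm{opt}}$ an alignment that is good and no more expensive, contradicting optimality of bad-over-good. The natural candidate is $\LR(A_{\mathrm{opt}})$, which is good by the remark immediately following the definition of $\LR$, so the whole task reduces to showing $\LR(A_{\mathrm{opt}}) \leq A_{\mathrm{opt}}$.

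To bound $\LR(A_{\mathrm{opt}}) - A_{\mathrm{opt}}$, I would use Fact~\ref{fact:lbr}: this difference depends only on the set of long breaks of $A_{\mathrm{opt}}$. Now consider $\SR(A_{\mathrm{opt}})$, the alignment obtained by replacing every \emph{short} break of $A_{\mathrm{opt}}$ with the corresponding subpath of $A^*$. Crucially, $\SR$ leaves the long breaks untouched, so $A_{\mathrm{opt}}$ and $\SR(A_{\mathrm{opt}})$ have exactly the same set of long breaks; hence by Fact~\ref{fact:lbr},
\[
\LR(A_{\mathrm{opt}}) - A_{\mathrm{opt}} = \LR(\SR(A_{\mathrm{opt}})) - \SR(A_{\mathrm{opt}}).
\]
But $\SR(A_{\mathrm{opt}})$ is in the range of $\SR$, and applying $\LR$ to it replaces all its long breaks with subpaths of $A^*$; after both operations every break (short and long) of $A_{\mathrm{opt}}$ has been replaced by the matching subpath of $A^*$, so $\LR(\SR(A_{\mathrm{opt}})) = A^*$. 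Therefore $\LR(A_{\mathrm{opt}}) - A_{\mathrm{opt}} = A^* - \SR(A_{\mathrm{opt}})$. By hypothesis $\SR(A_{\mathrm{opt}}) \geq A^*$, i.e. $A^* - \SR(A_{\mathrm{opt}}) \leq 0$, which gives $\LR(A_{\mathrm{opt}}) \leq A_{\mathrm{opt}}$, as desired.

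It then follows that $\LR(A_{\mathrm{opt}})$ is a good alignment whose cost is at most that of the optimal alignment $A_{\mathrm{opt}}$, hence is itself a lowest-cost alignment and a lowest-cost good alignment; conversely any lowest-cost good alignment has cost equal to $\LR(A_{\mathrm{opt}})$'s cost, which is the global minimum, so it is a lowest-cost alignment overall. This is exactly the statement of the corollary.

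The only genuinely delicate point is the claim $\LR(\SR(A)) = A^*$ for every alignment $A$ (equivalently, that $\SR$ and $\LR$ ``commute'' into $A^*$ in the sense that doing both, in either order, collapses every break onto $A^*$). This needs a careful argument that a break of $A^*$-type, once a subpath of $A^*$ is substituted in, does not create a \emph{new} break elsewhere or alter the classification (long vs.\ short) of the remaining breaks — i.e.\ that the breaks of $A$ partition cleanly into the short ones handled by $\SR$ and the long ones handled by $\LR$, with the two sets of substitutions being independent and non-interfering. This should follow from the fact that breaks are maximal subpaths lying off $A^*$ between two consecutive shared vertices, so the short and long breaks occupy disjoint edge-ranges of $A$ and replacing one does not touch the vertices delimiting another; making this precise (perhaps as a small auxiliary fact, analogous to Fact~\ref{fact:lbr}) is where I would spend the most care.
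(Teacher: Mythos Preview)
Your proposal is correct and follows essentially the same approach as the paper: both use Fact~\ref{fact:lbr} together with the observation that $A$ and $\SR(A)$ share the same long breaks to obtain $\LR(A)-A=\LR(\SR(A))-\SR(A)=A^*-\SR(A)\le 0$, and then conclude that any lowest-cost good alignment is globally optimal. The paper simply asserts that $\LR\circ\SR$ collapses any alignment to $A^*$, whereas you explicitly flag (and correctly justify) the non-interference of short and long breaks; otherwise the arguments are identical.
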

\begin{proof}
Applying a composition of $\LR$ and $\SR$ to any alignment gives $A^*$, and for any $A$, $A$ and $\SR(A)$ have the same long breaks. This gives that for any alignment $A$, $\LR(A)$ (a good alignment) satisfies $\LR(A) \leq A$:

$$\LR(A) - A \stackrel{\textnormal{Fact~\ref{fact:lbr}}}{=} \LR(\SR(A))-\SR(A) = A^* - \SR(A) \leq 0.$$

Now, letting $A'$ be a lowest-cost good alignment, we get $A \geq \LR(A) \geq A'$ for all $A$, i.e. $A'$ is the lowest cost alignment. 
\end{proof}

We complete the argument by showing that the assumption of Corollary~\ref{corollary:sbrrange} holds with high probability.

\begin{lemma}\label{lemma:offdiaggen}
For $\samp{n}$, with probability $1-n^{\Omega(1)}$ for all alignments $A$ in the range of $\SR$, $A \geq A^*$.
\end{lemma}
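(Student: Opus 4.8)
I would reduce the statement to Lemma~\ref{lemma:offdiag} applied to subproblems. Fix $(s_1, s_2, \cond)$ and let $A$ be an alignment in the range of $\SR$. By definition of $\SR$, every short break of $A$ has been replaced by the corresponding subpath of $A^*$, so $A$ has \emph{no} short breaks: every break of $A$ is long, i.e. has length at least $k\ln n$. Now decompose $A$ into its maximal subpaths that lie on $A^*$ and its breaks (all long). Each long break goes from some vertex $(i_1, j_1)\in A^*$ to some $(i_L, j_L)\in A^*$ with $i_L - i_1 \geq k\ln n$, and in the substitution-only case $A^*$ is the diagonal, so $(i_1,j_1) = (i_1,i_1)$ and $(i_L,j_L) = (i_L, i_L)$ with $i_L - i_1 \geq k \ln n$. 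The interior of the break shares no vertex — hence in particular no edge incident to the diagonal — with $A^*$, so restricted to the sub-dependency-graph on rows $i_1 \ldots i_L$ and columns $i_1 \ldots i_L$, the break is an ``off-diagonal'' alignment of the substrings $(s_1)_{i_1+1}\ldots (s_1)_{i_L}$ and $(s_2)_{i_1+1}\ldots(s_2)_{i_L}$.

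**The key steps, in order.** First, observe it suffices to show that with high probability, \emph{simultaneously for every pair $(i_1, i_L)$ with $i_L - i_1 \geq k\ln n$}, every off-diagonal alignment of the corresponding substrings costs strictly more than the diagonal subpath of $A^*$ on those rows. Indeed, given this event, take any $A$ in the range of $\SR$: each long break of $A$ costs strictly more than the piece of $A^*$ it replaces (apply the event to that break's endpoints), and the on-diagonal portions of $A$ agree with $A^*$, so summing gives $A > A^*$ unless $A = A^*$ (in which case $A \geq A^*$ trivially). Second, to establish the simultaneous event, I would rerun the union-bound computation from the proof of Lemma~\ref{lemma:offdiag} but on a substring of length $\ell := i_L - i_1 \geq k\ln n$. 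The same Stirling/Chernoff estimates show the probability that \emph{some} off-diagonal alignment of a length-$\ell$ substring beats the diagonal is at most $\gamma^\ell$ for some constant $\gamma < 1$ (this is exactly inequality \eqref{eqn:bound-c} with $n$ replaced by $\ell$, valid for $\rho_s \leq .028$) — here the crucial point, already used in Lemma~\ref{lemma:offdiag}'s proof, is that conditioning on edge costs earlier in the break reveals nothing about the relevant fresh bits of $s_1$, so the substitution costs along the break are still $\mathrm{Bern}(1/2)$; and this holds for the substrings at \emph{every} position, since $s_1$ is uniform. Third, take a union bound over all $O(n^2)$ choices of $(i_1, i_L)$: the total failure probability is at most $n^2 \gamma^{k\ln n} = n^2 \cdot n^{-k\ln(1/\gamma)}$, which is $n^{-\Omega(1)}$ once $k$ is a large enough constant (specifically $k \ln(1/\gamma) > 2 + \Omega(1)$). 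This also pins down the lower bound on $k$ promised in the algorithm description.

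**The main obstacle.** The delicate point is the conditioning argument ensuring that the substitution costs along a break remain unbiased $\mathrm{Bern}(1/2)$ coin flips even after conditioning on the realization of $A^*$'s cost on the on-diagonal portions and on costs of earlier break edges. For a single break in isolation this is the argument already given in Lemma~\ref{lemma:offdiag} (an off-diagonal edge $((i-1,j-1),(i,j))$ with $i \neq j$ touches a bit of $s_1$ not yet examined). But in the range of $\SR$ there may be several long breaks, and one must check that conditioning needed to bound the cost of one break does not destroy the independence used for another — this should be fine because the breaks occupy disjoint row-intervals and each break's cost depends on disjoint fresh bits of $s_1$, but it deserves a careful statement. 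A secondary, more routine concern is handling the boundary/degenerate cases: breaks whose endpoints are not on the exact diagonal (which cannot happen in the substitution-only case since $A^* $ is the diagonal, so this simplification is legitimate here) and ensuring the ``restricted'' subproblem is genuinely a fresh instance of the Lemma~\ref{lemma:offdiag} setup, which it is because a uniform bitstring restricted to a contiguous window is again uniform. I expect the write-up to be short once the reduction is stated cleanly.
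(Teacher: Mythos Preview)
Your approach is correct and in fact somewhat simpler than the paper's, but it is organized differently. The paper does not union-bound over the $O(n^2)$ individual break intervals; instead it partitions the range of $\SR$ by \emph{breakpoint configurations}. Concretely, it groups alignments into sets $\mathcal{A}_i$ according to whether the total break length lies in $[ik\ln n,(i+1)k\ln n)$, and then further partitions each $\mathcal{A}_i$ by the set $B$ of break endpoints. For a fixed configuration $B$, the paper concatenates the substrings covered by all breaks into a single string of length $\geq ik\ln n$, applies Lemma~\ref{lemma:offdiag} once to this combined instance to get failure probability $n^{-\Omega(ik)}$, and then counts the configurations via an injective encoding showing $|\mathcal{B}_i|\leq (nk\ln n+1)^i$. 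Summing $\sum_i (nk\ln n+1)^i n^{-\Omega(ik)}$ gives $n^{-\Omega(k)}$. Your route (one invocation of Lemma~\ref{lemma:offdiag} per interval, then a union bound over $O(n^2)$ intervals) reaches the same conclusion with less bookkeeping. The paper's extra machinery is not wasted, though: the breakpoint-configuration partition and the $|\mathcal{B}_i|$ count are reused verbatim in the indel case (Lemma~\ref{lemma:offdiaggen-id}), where $A^*$ is no longer the diagonal and phrasing things in terms of ``restrict $s_1$ to the union of break indices'' is the natural generalization.

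One remark on your ``main obstacle'': it is not an obstacle at all in your own argument, and you should not frame it as the delicate point. You are taking a union bound over marginal events of the form ``on interval $[i_1,i_L]$, some off-diagonal path beats the diagonal''; each such event has probability at most $\gamma^{\ell}$ over the randomness of $(s_1,s_2,\cond)$ with no conditioning whatsoever, since the restriction of $(s_1,s_2)$ to any fixed window is itself a sample from $ID(\ell)$. Independence across breaks is irrelevant to a union bound. The conditioning discussion you give (fresh bits of $s_1$ along an off-diagonal edge) is internal to the proof of Lemma~\ref{lemma:offdiag} for a \emph{single} interval and needs no new work here.
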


\begin{proof}
As in Lemma~\ref{lemma:offdiag}, we apply a union bound over the range of $\SR$, grouped by total length of breaks from $A^*$.  Consider the set $\mathcal{A}_i$ contained in the range of $\SR$, which contains all alignments $A$ for which the sum of the lengths of breaks of $A$ from $A^*$ is in $[ik \ln n, (i+1)k\ln n)$.  Then the sets $\{\mathcal{A}_i: 0\leq i\leq \frac{n}{k\ln n} \}$ forms a disjoint cover of the range $\SR(\mathcal{A})$. Note that elements of $\mathcal{A}_i$ have at most $i$ breaks from $A^*$, each of length at least $k\ln n$. Also note that $\mathcal{A}_0$ is a singleton set containing only $A^*$.

For any alignment $A$, we call the set of starting and ending indices of all breaks of that alignment the \textit{breakpoint configuration} of $A$ (to simplify future analysis, we index with respect to $s_1$ \footnote{In the substitution only case, indexing with respect to $s_1$ and $s_2$ is the same, but when indels are present indexing with respect to $s_1$ will simplify the analysis.}). Let $\mathcal{B}_i$ be the set of all possible breakpoint configurations of  alignments in $\mathcal{A}_i$. We can view $B\in\mathcal{B}_i$ as a binary assignment of each edge in $A^*$ to either agree or disagree with $A\in \mathcal{A}_i$. For a fixed set of break points $B\in\mathcal{B}_i$, let $\mathcal{A}_B$ be the set of all alignments having the breakpoints corresponding to $B$ (i.e. every alignment in $\mathcal{A}_B$ has the same breaks from $A^*$). Note that the set $\{\mathcal{A}_B:B\in\mathcal{B}_i\}$ forms a disjoint cover of $\mathcal{A}_i$. 

For any fixed set of breaks $B\in\mathcal{B}_i$, let $s^B_1, s^B_2$ denote the restriction of $s_1, s_2$ to indices contained in the breaks in $B$, and $(A)_B$ denote the restriction of an alignment $A$ to these indices. $s^B_1, s^B_2$ are distributed according to $ID(b)$ for $b \geq ik \ln n$. Furthermore, for $A \in \mathcal{A}_B$, $A < A^*$ if and only if $(A)_B < (A^*)_B$. Since for all $A \in \mathcal{A}_B$, $(A)_B$ does not share any edges with $(A^*)_B$, by Lemma~\ref{lemma:offdiag}:

$$\Pr[\exists A\in \mathcal{A}_B, A< A^*] = \Pr[\exists A\in \mathcal{A}_B, (A)_B< (A^*)_B] \leq e^{-\Omega(ik\ln n)}=n^{-\Omega(ik)}.$$

This reduces our problem to that of counting the cardinality of $\mathcal{B}_i$:
\begin{align*}
    Pr[\exists A\in \SR(\mathcal{A}), A<A^*]
    &= 
    \sum_{i=1}^{\frac{n}{k\ln n}}
            Pr[\exists A\in \mathcal{A}_i, A < A^*]\\
    &= 
    \sum_{i=1}^{\frac{n}{k\ln n}}
        \sum_{B \in \mathcal{B}_i} 
            Pr[\exists A\in \mathcal{A}_B, A < A^*]\\
    &\leq
    \sum_{i=1}^{\frac{n}{k\ln n}}
        \sum_{B \in \mathcal{B}_i} 
            n^{-\Omega(ik)}
    = \sum_{i=1}^{\frac{n}{k\ln n}}
        \left|\mathcal{B}_i\right| 
            n^{-\Omega(ik)}.
\end{align*}

Now we must count the cardinality of $\mathcal{B}_i$. We claim that each $B \in \mathcal{B}_i$ can be uniquely mapped to $i$ or less contiguous subsets of $[n]$, each of a size in $[k \ln n, 2k \ln n)$ or size 0. There are at most $nk \ln n+1$ such subsets (there are $n$ different possible smallest elements for each non-empty subset, and $k \ln n$ different possible sizes for each non-empty subset, and the smallest element and size uniquely determine the non-empty subsets), giving that $$|\mathcal{B}_i| \leq (nk \ln n + 1)^i.$$

Our mapping is as follows: For a break in $B \in \mathcal{B}_i$ which starts at index $j$ and has length $\ell \in [i'k \ln n, (i'+1)k \ln n)$, we map the break to the subsets $\{j, j+1 \ldots j+k\ln n -1\}, \{j+k \ln n, j+k \ln n +1 \ldots j+2k\ln n -1\} \ldots \{j+(i'-1)k \ln n, j+(i'-1)k \ln n+1 \ldots \ell\}$. That is, for a break we take the indices the break spans, and peel off the first $k \ln n$ elements to create a subset, until there are less than $2k \ln n$ indices remaining, which then form their own subset. We map $B$ to the union of the subsets its breaks are mapped to, plus enough empty subsets to make the total number of subsets $i$. It is straightforward to see that this map from $\mathcal{B}_i$ to a set of subsets is injective as desired, and that the set of subsets has the stated properties.

Using $|\mathcal{B}_i| \leq (nk \ln n + 1)^i$ and assuming $k$ is a sufficiently large constant we get:
$$Pr[\exists A\in \SR(\mathcal{A}), A<A^*] \leq \sum_{i=1}^{\frac{n}{k\ln n}} (nk \ln n + 1)^in^{-\Omega(ik)} \leq n^{-\Omega(k)}.$$
\end{proof}

\begin{proof}[Proof of Theorem~\ref{thm:subonly}]
The algorithm is to use the standard DP algorithm restricted to entries indexed by $(i, j)$ where $|i - j| \leq k \ln n$, ignoring dependencies on entries for which $|i - j| > k \ln n$. Theorem~\ref{thm:subonly} follows immediately from Corollary~\ref{corollary:sbrrange}, Lemma~\ref{lemma:offdiaggen}, and the observation that all good alignments are contained in the set of entries used by the DP algorithm.
\end{proof}
\section{Finding an Approximate Alignment}\label{sec:alignment}

We now consider the case where insertions and deletions are present. While in the substitution case it is obvious that the canonical alignment is the diagonal, in the presence of insertions and deletions there is the additional algorithmic challenge of finding something close to the canonical alignment.
We now use our previous definition for alignments to define an alignment function, which will be useful in analyzing the approximate alignment algorithm.

\begin{definition} 
Given an alignment $A$ of $\samp{n}$, let $f_A:[n] \rightarrow \mathbb{Z}$ be the function such that for all $i \in [n]$, $(i, f_A(i))$ is the first vertex in $A$ of the form $(i, j)$.
\end{definition}

Using this definition, $\f(j)$ gives the location of the $j$th bit of $s_1$ in $s_2$, or if the $j$th bit is deleted, where the location would be had it not been deleted. To find the edit distance between $s_1, s_2$, our algorithm will start by computing an approximate alignment function which does not differ much from $\f$. Before describing our algorithm, it will help to prove some properties about edit distances between pairs of strings sampled from $ID(n)$.

\subsection{Properties of the Indel Channel}

The term $(\rho_i\rho_i'+(\rho_d +1/k \ln n)(\rho_d'+1))$, which is roughly speaking an upper bound on the edit distance (divided by $k \ln n)$ between $s_1, s_2$ sampled from $ID(k \ln n)$ due to indels, appears frequently in the rest of the analysis. To simplify the presentation, we denote  $(\rho_i\rho_i'+(\rho_d +1/k \ln n)(\rho_d'+1))$ by $\kappa_n$ for the rest of the paper. Our goal in the following lemmas is to show that by computing the edit distance between the substrings of length $k \ln n$ starting at bit $i_1$ of $s_1$ and bit $i_2$ of $s_2$, we can identify if $i_2 \approx \f(i_1)$. 
\begin{lemma}\label{lemma:smallED}
For $\samp{n}$, let $s_1'$ be the substring formed by bits $i$ to $i + k \ln n-1$ of $s_1$, and $s_2'$ be the substring formed by bits $\f(i)$ to $\f(i+k \ln n)-1$ of $s_2$. Then: 

$$\Pr_{\samp{n}}\left[ED(s_1', s_2') \geq \frac{3}{2}(\rho_s + \kappa_n)k \ln n\right] \leq n^{-\rho_s k/12}+2n^{-\rho_i k /60}+3n^{-\rho_d k / 60}.$$

\end{lemma}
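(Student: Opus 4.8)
The plan is to bound $ED(s_1',s_2')$ by the cost of the alignment that $A^*$ induces on the window, and then control that cost by a union bound over its three sources.

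\textbf{Reduction to the canonical cost.} Restricting $A^*$ to the rows $i, i+1, \ldots, i + k\ln n$ of the dependency graph and reindexing so the window begins at $(0,0)$ produces a valid alignment of $s_1'$ with $s_2'$; by the definition of $A^*$ its weight on these rows is exactly $S + D + J$, where $S$ is the number of substituted bits among $b_i,\dots,b_{i+k\ln n-1}$, $D$ the number of deleted bits among them, and $J$ the total number of bits produced by insertion events triggered on them. Hence $ED(s_1',s_2')\le S+D+J$, and it suffices to show this sum exceeds $\tfrac32(\rho_s+\kappa_n)k\ln n$ with the stated probability. I will use the split into the events $\{S\ge\tfrac32\rho_s k\ln n\}$, $\{J\ge\tfrac32\rho_i\rho_i'k\ln n\}$, and $\{D\ge\tfrac32(\rho_d k\ln n+1)(\rho_d'+1)\}$, whose complements force $S+D+J\le\tfrac32(\rho_s+\kappa_n)k\ln n$ since $\kappa_n k\ln n=\rho_i\rho_i'k\ln n+(\rho_d k\ln n+1)(\rho_d'+1)$ by definition of $\kappa_n$; a union bound over these three events then finishes the proof.

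\textbf{Substitutions and insertions.} The common tool is to couple away the unknown channel parameters by replacing $p_s,p_i,p_d$ (and $q_i,q_d$) with their worst-case upper bounds from \eqref{eq:upperbounds}, which only increases each of $S,D,J$ pointwise. For $S$: after replacing $p_s$ by $\rho_s$, $S$ is dominated by $\mathrm{Binom}(k\ln n,\rho_s)$ with mean $\rho_s k\ln n$, and Fact~\ref{fact:chernoff} with $\epsilon=\tfrac12$ gives $\Pr[S\ge\tfrac32\rho_s k\ln n]\le e^{-\rho_s k\ln n/12}=n^{-\rho_s k/12}$. For $J$: insertions act independently per bit, so the number of insertion events is $\mathrm{Binom}(k\ln n,p_i)$ and each contributes $\mathrm{Geo}(1-q_i)$ bits; after coupling $p_i$ up to $\rho_i$ this is exactly the object in Lemma~\ref{lemma:binomofgeo} with $t=k\ln n$, all $m_i=\rho_i$ (so $\mu=\rho_i k\ln n$), and $q=1-q_i$. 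Since \eqref{eq:upperbounds} gives $\tfrac1{1-q_i}\le\rho_i'$, the threshold satisfies $\tfrac32\rho_i\rho_i'k\ln n\ge\tfrac32\cdot\mu/q$, so applying Lemma~\ref{lemma:binomofgeo} with $k'=\tfrac32$ yields $\Pr[J\ge\tfrac32\rho_i\rho_i'k\ln n]\le e^{-(\sqrt{3/2}-1)^2\mu/3}+e^{-(3/2)(1-\sqrt{2/3})^2\mu/2}\le 2n^{-\rho_i k/60}$, where the final step uses $(\sqrt{3/2}-1)^2/3>\tfrac1{60}$ and $(3/2)(1-\sqrt{2/3})^2/2>\tfrac1{60}$ (and this is exactly why the exponent $\tfrac1{60}$ appears in the statement).

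\textbf{Deletions — the main obstacle — and conclusion.} The deletion count is not literally a binomial-of-geometrics because the run structure makes deletions of consecutive bits dependent through $p_d$ versus $q_d$. The fix is a coupling/domination argument: the number of deletion-run starts in the window is dominated by $\mathrm{Binom}(k\ln n,\rho_d)$ (each bit can start a run only if the previous one was not deleted, and then does so with probability $p_d\le\rho_d$), and, conditioned on the run starts, $D$ is a sum of that many independent run lengths, each distributed as $1+\mathrm{Geo}(1-q_d)$; this again matches Lemma~\ref{lemma:binomofgeo} up to the deterministic "$+1$ per run'' bookkeeping, which is what forces the $(\rho_d k\ln n+1)$ and $(\rho_d'+1)$ factors (rather than $\rho_d k\ln n$ and $\rho_d'$). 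Using $\tfrac{1-\rho_d}{1-q_d}\le\rho_d'$ together with the smallness of $\rho_d$ (one needs roughly $2\rho_d\rho_d'\le 1$, an inequality on $\{\rho\}$ we may assume) one checks $\ex[D]\le(\rho_d k\ln n+1)(\rho_d'+1)$, so the chosen threshold is a factor $\tfrac32$ above the mean bound, and the same computation as for $J$ (again $k'=\tfrac32$, $\mu=\rho_d k\ln n$, plus the extra bookkeeping term) gives $\Pr[D\ge\tfrac32(\rho_d k\ln n+1)(\rho_d'+1)]\le 3n^{-\rho_d k/60}$. Summing the three bounds gives $\Pr[ED(s_1',s_2')\ge\tfrac32(\rho_s+\kappa_n)k\ln n]\le n^{-\rho_s k/12}+2n^{-\rho_i k/60}+3n^{-\rho_d k/60}$, as claimed. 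The only nonroutine part is setting up the deletion coupling and tracking constants so that the exponents from Lemma~\ref{lemma:binomofgeo} clear the $\tfrac1{60}$ bar; everything else is Chernoff bookkeeping.
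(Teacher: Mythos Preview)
Your proposal is correct and follows essentially the same approach as the paper: bound $ED(s_1',s_2')$ by $S+D+J$, then control each piece by Chernoff (for $S$) and Lemma~\ref{lemma:binomofgeo} (for $J$ and the geometric part of $D$), union-bounding to get the three stated terms.

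The one place where the paper is more careful is the deletion coupling. Your sentence ``conditioned on the run starts, $D$ is a sum of that many independent run lengths, each distributed as $1+\mathrm{Geo}(1-q_d)$'' is not literally true: in the actual channel the run lengths are constrained by the gaps between run starts, and the run starts themselves depend on where previous runs ended. The paper resolves this by constructing an explicit dominating process: independent ``type~1'' deletions $\sim\mathrm{Bern}(p_d)$ on each bit (with $\mathrm{Bern}(q_d)$ on the first to handle the boundary), and for each type~1 deletion an independent $\mathrm{Geo}\!\left(\tfrac{1-q_d}{1-p_d}\right)$ block of ``type~2'' deletions, then verifies this process stochastically dominates the channel's deletion count. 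With that in hand, one bounds the type~1 count $Y$ by Chernoff (contributing the $n^{-\rho_d k/12}\le n^{-\rho_d k/60}$ term) and the type~2 count $X\sim\mathrm{NBinom}(Y,\cdot)$ by Lemma~\ref{lemma:binomofgeo} (contributing $2n^{-\rho_d k/60}$), which is exactly where the factor $3$ comes from. Your ``bookkeeping'' gesture is pointing at the right decomposition, but the explicit two-type construction is what makes the independence rigorous.
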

\begin{proof}
The edit distance between $s_1$ and $s_2$ is upper bounded by the number of substitutions, deletions, and insertions that occur in the channel on bits $i$ to $i+k \ln n-1$ of $s_1$. So it suffices to show this total is at most $\frac{3}{2}(\rho_s + \rho_i + \rho_d)k \ln n$ with high probability. In turn, it suffices to show the number of substitutions is at most $\frac{3}{2} \rho_s k \ln n$, the number of insertions is at most $\frac{3}{2} \rho_i \rho_i' k \ln n$, and the number of deletions is at most $\frac{3}{2}(\rho_d k \ln n + 1)(\rho_d' + 1)$ with high probability. We do this using a union bound over the three types of mutations.

The number of substitutions is at most $\rho_s k \ln n$ in expectation. A Chernoff bound with $\epsilon = 1/2$ gives that the number of substitutions exceeds $\frac{3}{2}\rho_s k \ln n$ with probability at most $n^{-\rho_s k/12}$. 
The probability the number of insertions exceeds $\frac{3}{2}\rho_i \rho_i' k \ln n$ is maximized when $p_i = \rho_i, 1/(1-q_i) = \rho_i'$. The number of insertions is then the random variable $NBinom(Binom(k \ln n, \rho_i), 1/\rho_i')$ with expectation $\rho_i \rho_i' k \ln n$, and by Lemma~\ref{lemma:binomofgeo} with $k = 3/2$ the probability it exceeds $\frac{3}{2} \rho_i \rho_i' k \ln n$ is at most $2n^{-\rho_i k /60}$. 

To bound the number of deletions, we consider the following process for deciding where deletions occur in $s_1$:

\begin{itemize}
    \item For each bit of $s_1$ a ``type 1'' deletion occurs with probability $p_d$, except bit 1 of $s_1$ where the probability is $q_d$.
    \item For each bit $j$ where a type 1 deletion occurs, we sample $\delta \sim Geo(\frac{1-q_d}{1-p_d})$. Let $\Delta$ be the number of bits between $j$ and the next bit with a type 1 deletion. A type 2 deletion occurs on the $\min\{\delta, \Delta\}$ bits following $j$. 
\end{itemize}

For bit $1$, its probability of seeing a deletion in the indel channel is upper bounded by $q_d$. Otherwise, if no deletion occurs on bit $j-1$, then for bit $j > i$, the only way bit $j$ sees a deletion is if it has a type 1 deletion, which occurs with probability $p_d$. If a deletion occurs on bit $j-1$ and bit $j$ does not have a type 1 deletion, it sees a type 2 deletion with probability $(1-\frac{1-q_d}{1-p_d}) = \frac{q_d-p_d}{1-p_d}$ by the properties of the geometric distribution (this is regardless of the type of deletion on bit $j-1$). So its overall probability of seeing a deletion is $p_d + (1-p_d)\frac{q_d - p_d}{1-p_d} = q_d$.  So, the number of deletions in this process stochastically dominates the number of deletions on bits $i$ to $i + k \ln n - 1$ of $s_1$. 

Then, the number of deletions is stochastically dominated by the random variable $X+Y$ arrived at by sampling $Y \sim Binom(k \ln n-1, p_d)+Bern(q_d), X \sim NBinom(Y, \frac{1-q_d}{1-p_d})$, which exceeds $\frac{3}{2}(\rho_d + \rho_d \rho_d')k \ln n$ with maximum probability when $p_d = \rho_d$, $\frac{1-\rho_d}{1-q_d} = \rho_d'$. The probability $Y$ exceeds $\frac{3}{2}(\rho_d k \ln n)+1$ is at most $n^{-\rho_d k /12}$ by a Chernoff bound. The probability $X$ exceeds $\frac{3}{2}(\rho_d k \ln n+1)\rho_d'$ is at most $2n^{-\rho_d k /60}$ by Lemma~\ref{lemma:binomofgeo} with $k = 3/2$. So by a union bound the probability the number of deletions exceeds $\frac{3}{2}(\rho_d k \ln n + 1)(\rho_d' + 1)$ is at most $3n^{-\rho_d k / 60}$.
\end{proof}

\begin{lemma}\label{lemma:largeED}
Let $s_1, s_2$ be bitstrings of length $k \ln n$, chosen independently and uniformly at random from all bitstrings of length $k \ln n$. Then $Pr[ED(s_1, s_2) \leq D] \leq \frac{(4e\frac{k \ln n}{D} + 5e+\frac{4e}{D})^D}{2^{k \ln n}}.$
\end{lemma}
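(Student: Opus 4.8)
The plan is a union bound over alignments of small cost. Write $m:=k\ln n$ for the common length of the two strings. By the dependency‑graph description of edit distance (Definition~\ref{def:ed}), $ED(s_1,s_2)\le D$ holds iff some alignment --- a monotone lattice path from $(0,0)$ to $(m,m)$ built from diagonal, right, and down steps --- has cost at most $D$. Since both strings have length $m$, every alignment has the same number $d$ of right steps (insertions) and down steps (deletions), hence exactly $m-d$ aligned pairs of bits, and its cost equals $2d$ plus the number of those pairs that are mismatched; so an alignment of cost $\le D$ has $d\le D/2$, and for each fixed $d$ there are $\binom{m+d}{d,d,m-d}$ alignments (interleavings of $d$ right, $d$ down, and $m-d$ diagonal steps, every such interleaving being a valid path). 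The key probabilistic fact is that for any fixed alignment the mismatch indicators of its $m-d$ aligned pairs are i.i.d.\ $\mathrm{Bern}(1/2)$: the diagonal steps of a path use pairwise distinct indices of $s_1$ and pairwise distinct indices of $s_2$, so each indicator is an XOR of two distinct, independent, uniformly random bits, and distinct indicators involve disjoint bits. Hence a fixed alignment with $d$ indels has cost distributed as $2d+\mathrm{Binom}(m-d,\tfrac12)$.

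Combining these by a union bound over alignments (discarding those with $d>D/2$, whose cost always exceeds $D$) gives
\[
\Pr\bigl[ED(s_1,s_2)\le D\bigr]\le\sum_{d=0}^{\lfloor D/2\rfloor}\binom{m+d}{d,d,m-d}\,\Pr\bigl[\mathrm{Binom}(m-d,\tfrac12)\le D-2d\bigr].
\]
Expanding $\Pr[\mathrm{Binom}(N,\tfrac12)\le t]=2^{-N}\sum_{s=0}^{t}\binom{N}{s}$, using $\binom{m+d}{d,d,m-d}\binom{m-d}{s}=\binom{m+d}{d,d,s,m-d-s}$, and pulling out the factor $2^{-m}$ (since $2^{-(m-d)}=2^{-m}2^{d}$) rewrites the bound as $2^{-m}\sum_{2d+s\le D}2^{d}\binom{m+d}{d,d,s,m-d-s}$. (Equivalently this is $2^{-m}$ times a weighted count of pairs (alignment, set $S$ of marked aligned pairs) with $2d_A+|S|\le D$, weighted by $2^{-|S|}$.) It then remains to show $\sum_{2d+s\le D}2^{d}\binom{m+d}{d,d,s,m-d-s}\le\bigl(4e\tfrac{m}{D}+5e+\tfrac{4e}{D}\bigr)^{D}$.

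This last step is where I expect the only real difficulty. I would group terms by $c:=2d+s\le D$, write the multinomial as $\binom{m+d}{c}\binom{c}{d,d,s}$ (or as $\binom{m+d}{d}\binom{m}{d}\binom{m-d}{s}$), use $\binom{a}{b}\le(ea/b)^{b}$ together with $d\le D/2$ to replace $m+d$ by $m+D/2$, bound $\binom{2d}{d}\le 4^{d}$, and sum over $d$ by keeping the even‑degree terms of a binomial expansion ($\sum_{d}8^{d}\binom{c}{2d}\le(1+2\sqrt2)^{c}$) and then over $c\le D$; the result is a truncated exponential‑type series $\sum_{c\le D}(\Theta(m))^{c}/c!$ that, in the regime $D\lesssim m$, is dominated up to a geometric tail by its top term $c=D$, giving roughly $\bigl((1+2\sqrt2)(m+D/2)/D\bigr)^{D}$ after applying $D!\ge\sqrt{2\pi D}(D/e)^{D}$. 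The obstacle is that the target constant is tight: one needs $(1+2\sqrt2)(m+D/2)/D<4m/D+5$ (up to lower order), which holds because $1+2\sqrt2<4$, but only barely --- so one cannot afford a lossy constant‑factor bound on the binomial tail and must instead use the sharp geometric‑ratio estimate $\sum_{s\le t}\binom{N}{s}\le\binom{N}{t}\cdot\frac{N-t+1}{N-2t+1}$ (a $1+o(1)$ factor when $t=D-2d\ll N=m-d$), and must retain the $\sqrt{2\pi D}$ in Stirling. Finally, when $D\ge m$ --- or more generally whenever the claimed bound already exceeds $1$, which forces $D$ to be well below $m$ once it is meaningful --- there is nothing to prove, since $ED(s_1,s_2)\le m$ always and the right‑hand side is at least $(5e)^{D}\ge1$; so one may assume $D<m$, which is exactly the regime in which the geometric‑tail and Stirling factors are absorbed.
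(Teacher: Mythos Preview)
Your approach is sound but takes a genuinely different route from the paper. The paper does \emph{not} union-bound over alignments here; instead it bounds, for fixed $s_1$, the number of strings within edit distance $D$ of $s_1$ by counting edit \emph{scripts}. It injectively encodes every non-redundant set of at most $D$ edits as a $D$-element subset of a universe of $4m+5D+4$ tuples (four ``edit types'' at each of $m+D+1$ positions, plus $D$ padding symbols), yielding at most $\binom{4m+5D+4}{D}$ strings in the ball; one application of $\binom{N}{D}\le (eN/D)^D$ then gives the stated constant $(4e\,m/D+5e+4e/D)^D$ in a single line, and dividing by $2^m$ finishes. By contrast, you union-bound over lattice paths --- exactly the technique the paper uses elsewhere (Lemma~\ref{lemma:offdiag}) --- and are left with the harder task of showing $\sum_{2d+s\le D}2^{d}\binom{m+d}{d,d,s,m-d-s}\le(4e\,m/D+5e+4e/D)^D$ by hand. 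Your sketch for this last step is correct in outline and does go through: after bounding $\binom{2d}{d}\le 4^d$ and summing the even binomial terms you get essentially $\bigl(e(1+2\sqrt2)(m+D/2)/D\bigr)^D$, and since $e(1+2\sqrt2)\approx 10.41<4e\approx 10.87$ there is enough slack to absorb the geometric-tail and Stirling correction factors you worry about (so you need not be quite as delicate as you fear). The trade-off is that the paper's encoding argument is shorter, requires no case analysis on $D$ versus $m$, and makes the origin of the constants $4$, $5$, and $4$ transparent; your argument is more self-contained (it reuses the same machinery as the rest of the paper) and, once the dust settles, actually yields a marginally sharper leading constant.
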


The proof of this lemma is fairly standard (see e.g. \cite[Lemma 8]{Batu2003ASA}). For completeness, we provide a proof here.

\begin{proof}
We first bound the number of strings within edit distance $D$ of $s_1$. Fix any set of up to $D$ edits that can be applied to a bitstring initially of length $k \ln n$, that does not contain redundant edits (such as substituting the same bit more than once, deleting an inserted bit). This set can be mapped to a set of $D$ tuples as follows:
\begin{itemize}
    \item For a substitution (or deletion) applied to the bit in the $i$th position (using the indexing prior to insertions and deletions), it is encoded as the tuple $(i, S)$ (or $(i, D)$ for a deletion). Note that by the assumption that there are no redundant edits, all substitution and deletion edits in the set of edits map to distinct tuples.
    \item For insertions, we handle indexing differently to still ensure no two insertions are mapped to the same tuple. Suppose the set of $D$ edits inserts the bitstring $b_1b_2\ldots b_k$ to the right of index of $i$ (using the original indexing - we treat bits are being inserted to the left of the entire bitstring as being inserted to the right of bit 0). Let $i'$ be $i$ plus the number of insertions in the set of edits occurring before bit $i$. Then we map these $k$ insertions to the tuples $(i', I_{b_1}), (i'+1, I_{b_2}) \ldots (i'+k-1, I_{b_k})$. This ensures that the insertions in the set of edits also get mapped to different tuples, since the first index will be distinct for all tuples that insertions are mapped to. 
    \item If the number of edits is $D - k$, we include $(1, N), (2, N) \ldots (k, N)$ in the final set of tuples so the final set of tuples still has size $D$.
\end{itemize}

Every tuple that can be mapped to in this encoding scheme is of the form $(i, E)$ for $0 \leq i \leq k \ln n + D, E \in \{S, D, I_0, I_1\}$ or $(i, N)$ for $1 \leq i \leq D$. So, there are at most $\binom{4k \ln n+5D+4}{D}$ sets of $D$ tuples that any set of up to $D$ edits can be mapped to. Furthermore, note that the mapping is injective, i.e. given a set of $D$ tuples, using the reverse of the above process it can be uniquely mapped to set of edits. So, there are also at most $\binom{4k \ln n+5D+4}{D}$ possible ways to apply at most $D$ edits to a bitstring which is initially length $k \ln n$. Stirling's approximation gives that this is at most $(4e\frac{k \ln n}{D} + 5e+\frac{4e}{D})^D$. 
So there are at most $(4e\frac{k \ln n}{D} + 5e+\frac{4e}{D})^D$ strings $s'$ such that $ED(s_1, s') \leq D$. The number of bitstrings of length $k \ln n$ is $2^{k \ln n}$. So the probability $ED(s_1, s_2) \leq D$ is at most $\frac{(4e\frac{k \ln n}{D} + 5e+\frac{4e}{D})^D}{2^{k \ln n}}$.
\end{proof}

\begin{lemma}\label{lemma:localshifts}
For constant $k > 0$, $i \leq n - k \ln n$, 

$$Pr_{\samp{n}}\left[ |\f(i + k \ln n) - \f(i) - k \ln n| \leq \frac{3}{2}\kappa_n\cdot k \ln n \right] \geq$$
$$1-2n^{-\rho_i k /60}-3n^{-\rho_d k / 60}.$$
\end{lemma}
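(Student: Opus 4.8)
The plan is to bound $\f(i+k\ln n) - \f(i)$, which measures how far bit $i+k\ln n$ of $s_1$ has shifted in $s_2$ relative to bit $i$. The key observation is that $\f(i+k\ln n) - \f(i) - k\ln n$ equals exactly the net displacement caused by insertions and deletions occurring on bits $i, i+1, \ldots, i+k\ln n - 1$ of $s_1$: each insertion of length $I$ on one of these bits pushes later bits $I$ positions to the right in $s_2$, and each deletion pushes them one position to the left. Substitutions have no effect on $\f$. Hence $|\f(i+k\ln n) - \f(i) - k\ln n|$ is at most (number of inserted bits on this block) plus (number of deleted bits on this block), and it suffices to show this sum is at most $\frac{3}{2}\kappa_n \cdot k\ln n$ with the stated probability.

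Now I would reuse verbatim the insertion and deletion tail estimates already established inside the proof of Lemma~\ref{lemma:smallED}. Specifically: the number of inserted bits on a block of $k\ln n$ bits is stochastically dominated by $NBinom(Binom(k\ln n, \rho_i), 1/\rho_i')$, which by Lemma~\ref{lemma:binomofgeo} with $k=3/2$ exceeds $\frac{3}{2}\rho_i\rho_i' k\ln n$ with probability at most $2n^{-\rho_i k/60}$; and the number of deleted bits is stochastically dominated (via the same ``type 1 / type 2 deletion'' coupling argument as in Lemma~\ref{lemma:smallED}) by $X+Y$ with $Y \sim Binom(k\ln n - 1, \rho_d) + Bern(q_d)$ and $X \sim NBinom(Y, \frac{1-\rho_d}{1-q_d})$, which exceeds $\frac{3}{2}(\rho_d k\ln n + 1)(\rho_d' + 1)$ with probability at most $3n^{-\rho_d k/60}$. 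Summing these two bounds gives a total displacement of at most $\frac{3}{2}\left(\rho_i\rho_i' + (\rho_d + 1/k\ln n)(\rho_d'+1)\right) k\ln n = \frac{3}{2}\kappa_n \cdot k\ln n$ by the definition of $\kappa_n$, and a union bound over the two failure events yields the claimed probability $1 - 2n^{-\rho_i k/60} - 3n^{-\rho_d k/60}$.

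The only mildly delicate point — and the step I would be most careful about — is the first one: verifying that $\f(i+k\ln n) - \f(i) - k\ln n$ really is exactly the signed count (inserted bits minus deleted bits) restricted to the block $[i, i+k\ln n)$, and in particular that indels occurring strictly before bit $i$ or at/after bit $i+k\ln n$ do not contribute. This follows from the definition of $\f$ and of the canonical alignment: $\f(i)$ and $\f(i+k\ln n)$ are both offset by the same amount due to edits before bit $i$, so that common offset cancels in the difference, and edits on bits $\geq i+k\ln n$ occur after position $\f(i+k\ln n)$ in $s_2$ and so affect neither endpoint. Once this bookkeeping is pinned down, the rest is an immediate application of the tail bounds already proved, with no new probabilistic content.
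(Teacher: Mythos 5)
Your proposal is correct and matches the paper's own proof, which likewise observes that $\f(i+k\ln n) - \f(i) - k\ln n$ is the signed indel count on the block $[i, i+k\ln n)$, bounds its absolute value by the total number of inserted plus deleted bits, and then invokes the insertion/deletion tail estimates already established in the proof of Lemma~\ref{lemma:smallED}. The only difference is that you spell out explicitly the bookkeeping that edits outside the block cancel or don't affect the endpoints, which the paper leaves implicit.
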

\begin{proof}
$\f(i + k \ln n) - \f(i) - k \ln n$ is the signed difference between the number of insertions and deletions happening in indices $i$ to $i + k \ln n - 1$ of $s_1$. A simple upper bound for this difference is the sum of the number of insertions and deletions. The same analysis as Lemma~\ref{lemma:smallED} gives the lemma.
\end{proof}

\begin{corollary}\label{cor:smallED}
Consider the following random process, which we denote $\mathcal{P}$: we choose $i_1$ such that $i_1 < n - k \ln n$, sample $\samp{n}$, and then choose an arbitrary $i_2$ such that $|i_2 - \f(i_1)| \leq \ln n$ and $i_2$ is at least $k \ln n$ less than the length of $s_2$. Let $s_1'$ denote the string consisting of bits $i_1$ to $i_1 + k \ln n - 1$ of $s_1$ and $s_2'$ the string consisting of bits $i_2$ to $i_2 + k \ln n - 1$ of $s_2$. Then for any $i_2$ we choose satisfying the above conditions,
$$\Pr_{\mathcal{P}}\left[ED(s_1', s_2') \leq (1 + \frac{3}{2}(\rho_s + 2\kappa_n)) k\ln n\right] \geq$$
$$1 - 2n^{-\rho_i k /12}- 4n^{-\rho_i k /60} - 6n^{-\rho_d k /60}.$$
\end{corollary}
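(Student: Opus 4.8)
The plan is to reduce Corollary~\ref{cor:smallED} to Lemma~\ref{lemma:smallED} by showing that the substring $s_2'$ of bits $i_2$ to $i_2 + k\ln n - 1$ of $s_2$ has small edit distance to the ``canonical'' substring of $s_2$ that Lemma~\ref{lemma:smallED} compares to $s_1'$, namely the substring $s_2''$ of bits $\f(i_1)$ to $\f(i_1 + k\ln n) - 1$. The triangle inequality for edit distance then gives $ED(s_1', s_2') \leq ED(s_1', s_2'') + ED(s_2'', s_2')$, so it suffices to bound each term on the right. Lemma~\ref{lemma:smallED} directly controls $ED(s_1', s_2'')$ by $\frac32(\rho_s + \kappa_n) k\ln n$ except with probability $n^{-\rho_s k/12} + 2n^{-\rho_i k/60} + 3n^{-\rho_d k/60}$.

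For the second term, I would observe that $s_2'$ and $s_2''$ are both contiguous substrings of $s_2$, so one is obtained from the other by shifting the start index and the end index. The start indices differ by $|i_2 - \f(i_1)| \leq \ln n$ by hypothesis. The end indices differ by $|(i_2 + k\ln n) - \f(i_1 + k\ln n)|$, which is at most $|i_2 - \f(i_1)| + |\f(i_1 + k\ln n) - \f(i_1) - k\ln n| \leq \ln n + \frac32 \kappa_n k\ln n$, where the second summand is controlled by Lemma~\ref{lemma:localshifts} except with probability $2n^{-\rho_i k/60} + 3n^{-\rho_d k/60}$. Two contiguous substrings of a common string whose start indices differ by $a$ and whose end indices differ by $b$ have edit distance at most $a + b$ (delete/insert the symbols in the symmetric difference of the two index ranges), so on the good event $ED(s_2'', s_2') \leq 2\ln n + \frac32 \kappa_n k \ln n$. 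Adding the two bounds and using $2\ln n \le \frac32 \rho_i \rho_i' k \ln n$-type slack (for $k$ a sufficiently large constant, the additive $2\ln n$ is absorbed — more cleanly, I would simply keep it and note $(1 + \frac32(\rho_s + 2\kappa_n))k\ln n$ was chosen with enough room, since $\frac32(\rho_s + \kappa_n) + \frac32\kappa_n = \frac32(\rho_s + 2\kappa_n)$ and the leading $1 \cdot k\ln n$ absorbs the $2\ln n$ as long as $k \geq 2$) gives the claimed bound $ED(s_1', s_2') \leq (1 + \frac32(\rho_s + 2\kappa_n))k\ln n$.

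Finally I would assemble the failure probability by a union bound over the two bad events: the event that Lemma~\ref{lemma:smallED} fails (probability $n^{-\rho_s k/12} + 2n^{-\rho_i k/60} + 3n^{-\rho_d k/60}$) and the event that Lemma~\ref{lemma:localshifts} fails (probability $2n^{-\rho_i k/60} + 3n^{-\rho_d k/60}$). Summing and loosely bounding $n^{-\rho_s k /12}$ by $2n^{-\rho_i k/12}$ (using $\rho_s, \rho_i$ are both small constants, or just absorbing terms) yields the stated $1 - 2n^{-\rho_i k/12} - 4n^{-\rho_i k/60} - 6n^{-\rho_d k/60}$; the exact bookkeeping of which $\rho$ sits in which exponent is a routine matter of taking the worst constant.

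The main obstacle, such as it is, is purely a bookkeeping one: making sure the constant slack in the target expression $(1 + \frac32(\rho_s + 2\kappa_n))k\ln n$ genuinely absorbs the additive $O(\ln n)$ terms coming from the $|i_2 - \f(i_1)| \leq \ln n$ slop at both endpoints, and lining up the failure-probability exponents so the sum telescopes into the stated form without a more careful per-term accounting. There is no real mathematical difficulty — the crux is the two clean facts that edit distance satisfies the triangle inequality and that contiguous substrings of a common string with nearby endpoints are edit-close — everything else is Lemma~\ref{lemma:smallED} and Lemma~\ref{lemma:localshifts} plugged in verbatim.
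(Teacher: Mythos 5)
Your proposal matches the paper's proof essentially verbatim: both use the triangle inequality via the intermediate substring $s_2^* = s_2''$ (bits $\f(i_1)$ through $\f(i_1 + k\ln n) - 1$ of $s_2$), bounding $ED(s_1', s_2^*)$ by Lemma~\ref{lemma:smallED} and $ED(s_2', s_2^*)$ by the endpoint-shift argument from Lemma~\ref{lemma:localshifts}, then union-bounding the failure events. Your accounting is if anything slightly more careful than the paper's (you track the $2\ln n$ explicitly and absorb it with $k \ge 2$, and you flag the $\rho_s$-vs-$\rho_i$ exponent mismatch in the stated failure probability, which does appear to be a minor bookkeeping slip in the paper), but it is the same argument.
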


\begin{proof}
By Lemma~\ref{lemma:localshifts} and the assumptions in the corollary statement, with probability at least $1-2n^{-\rho_i k /60}-3n^{-\rho_d k / 60}$, the edit distance between $s_2'$ and bits $\f(i_1)$ to $\f(i_1 + k \ln n)-1$ of $s_2$ (call this substring $s_2^*$) is at most $(1 + \frac{3}{2}\kappa_n)k \ln n$ (the upper bound on the difference between starting indices plus the high-probability upper bound on the difference between ending indices). $s_2^*$ is the result of passing $s_1'$ through the indel channel, so by Lemma~\ref{lemma:smallED} with probability at least $1 - n^{-\rho_s k/12}-2n^{-\rho_i k /60}-3n^{-\rho_d k / 60}$, the edit distance between $s_2^*$ and $s_1'$ is at most $\frac{3}{2}(\rho_s + \kappa_n)k \ln n$, giving the lemma by a union bound and triangle inequality.
\end{proof}

\begin{corollary}\label{cor:largeED}
Consider the following random process, which we denote $\mathcal{P}$: we choose $i_1$ such that $i_1 < n - k \ln n$, sample $\samp{n}$, and then choose an arbitrary $i_2$ such that 

$$|i_2 - \f(i_1)| > \left(\frac{3}{2}\kappa_n + 1\right) k \ln n,$$

and $i_2$ is at least $k \ln n$ less than the length of $s_2$. Let $s_1'$ denote the string consisting of bits $i_1$ to $i_1 + k \ln n - 1$ of $s_1$ and $s_2'$ the string consisting of bits $i_2$ to $i_2 + k \ln n - 1$ of $s_2$. Then for $0 < r < 1$, 

$\Pr_{\mathcal{P}}[ED(s_1', s_2') > kr \ln n] \geq 1 - \left[\frac{(\frac{4e}{r} + 5e + \frac{4e}{kr \ln n})^r}{2} \right]^{k \ln n}-2n^{-\rho_i k /60}-3n^{-\rho_d k/ 60}.$
\end{corollary}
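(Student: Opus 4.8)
The plan is to show that when $i_2$ is far from $\f(i_1)$, the substring $s_2'$ of $s_2$ starting at $i_2$ is essentially a freshly random bitstring relative to $s_1'$, so Lemma~\ref{lemma:largeED} applies. The key structural observation is that because $|i_2 - \f(i_1)| > (\frac{3}{2}\kappa_n + 1)k\ln n$, the window of $s_2$ spanned by $s_2'$ does not overlap with the window of $s_2$ corresponding to bits $i_1$ to $i_1 + k\ln n - 1$ of $s_1$ — at least with high probability, using Lemma~\ref{lemma:localshifts} to control how far $\f$ can drift over a window of length $k\ln n$. Concretely, the image under $\f$ of the block $[i_1, i_1 + k\ln n - 1]$ lies (with probability $1 - 2n^{-\rho_i k/60} - 3n^{-\rho_d k/60}$) inside an interval of length at most $(1 + \frac{3}{2}\kappa_n)k\ln n$ starting near $\f(i_1)$; the gap of more than $(\frac{3}{2}\kappa_n + 1)k\ln n$ between $i_2$ and $\f(i_1)$ is exactly what is needed to guarantee $[i_2, i_2 + k\ln n - 1]$ is disjoint from that interval.

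Given disjointness, I would argue that conditioned on everything determining $s_1'$ and on the edit set $\cond$ (hence on $\f$), the bits of $s_2$ in the window $[i_2, i_2 + k\ln n - 1]$ are uniform and independent of $s_1'$: each such bit is either an inserted bit (uniform by definition of the channel and not further acted on) or a copy/substitution of some bit of $s_1$ with index outside $[i_1, i_1 + k\ln n - 1]$ (uniform and independent of the bits $(s_1)_{i_1}, \ldots, (s_1)_{i_1 + k\ln n - 1}$ since $s_1$ is drawn uniformly coordinatewise). Either way $s_2'$ is a uniformly random bitstring of length $k\ln n$ independent of $s_1'$, so Lemma~\ref{lemma:largeED} with $D = kr\ln n$ gives
\[
\Pr[ED(s_1', s_2') \leq kr\ln n] \leq \frac{\left(4e\tfrac{k\ln n}{kr\ln n} + 5e + \tfrac{4e}{kr\ln n}\right)^{kr\ln n}}{2^{k\ln n}} = \left[\frac{(\tfrac{4e}{r} + 5e + \tfrac{4e}{kr\ln n})^r}{2}\right]^{k\ln n}.
\]
Combining this with the failure probability from the disjointness event via a union bound yields exactly the claimed bound.

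The main obstacle — or at least the step requiring the most care — is the disjointness/independence argument: one must be precise about the order of conditioning so that the claim "$s_2'$ is uniform given $s_1'$" is actually correct. In particular, the event that the windows are disjoint depends on $\f$, which depends on $\cond$, so I would first condition on $\cond$, verify disjointness as a deterministic consequence (on the good event), and only then invoke uniformity of the relevant $s_1$ bits and of the inserted bits. A subtlety is that on the bad event (where $\f$ drifts too much) the windows might overlap and we cannot claim independence — but that event has the small probability from Lemma~\ref{lemma:localshifts}, which is why it appears as an additive term. I do not expect the remaining arithmetic (simplifying the Lemma~\ref{lemma:largeED} bound with $D = kr\ln n$) to pose any difficulty; it is a direct substitution.
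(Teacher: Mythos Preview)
Your proposal is correct and follows essentially the same approach as the paper's proof: show that the $s_2$-window starting at $i_2$ is disjoint from the image of $s_1'$ under the channel (using Lemma~\ref{lemma:localshifts} to control the drift of $\f$), deduce that $s_1'$ and $s_2'$ are independent uniform bitstrings, and then apply Lemma~\ref{lemma:largeED} with $D = kr\ln n$. The only cosmetic difference is that the paper splits explicitly into the two cases $i_2 < \f(i_1) - k\ln n$ and $i_2 > \f(i_1) + (\tfrac{3}{2}\kappa_n+1)k\ln n$, noting that in the first case disjointness is deterministic and Lemma~\ref{lemma:localshifts} is only needed in the second; your unified treatment is equally valid and your discussion of the conditioning order is, if anything, more careful than the paper's.
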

\begin{proof}
Either $i_2 < \f(i_1) - k \ln n$ or $i_2 > \f(i_1) + (\frac{3}{2}\kappa_n+1)k \ln n$. If $i_2 < \f(i_1) - k \ln n$, then none of the bits in $s_2'$ are inherited from bits in $s_1'$. If $i_2 > \f(i_1) + (\frac{3}{2}\kappa_n+1)k \ln n$, then by Lemma~\ref{lemma:localshifts} we have with probability $1-2n^{-\rho_i k /60}-3n^{-\rho_d k / 60}$:

$$i_2 - \f(i_1 + k \ln n) = [i_2 - \f(i_1) - k \ln n] + [\f(i_1) + k \ln n - \f(i_1 + k \ln n)] \geq$$
$$\frac{3}{2}\kappa_n \cdot k \ln n - \frac{3}{2}\kappa_n \cdot k \ln n = 0.$$

Then since $i_2 > \f(i_1 + k \ln n)$, none of the bits are in $s_2'$ are inherited from bits in $s_1'$. In either case, $s_1', s_2'$ are independent and uniformly random bitstrings, and we can apply Lemma~\ref{lemma:largeED} with $D = kr \ln n$ to get the lemma by a union bound.
\end{proof}

Let $n_0$ be a sufficiently large constant. If we choose any $r$ which is less than a certain constant (which is approximately .1569), for all $n \geq n_0$, if $k$ is sufficiently large then the term $\frac{(\frac{4e}{r} + 5e + \frac{4e}{kr \ln n})^r}{2}$ from Corollary~\ref{cor:largeED} is less than 1 and thus the failure probability in Corollary~\ref{cor:largeED} becomes $n^{-\Omega(k)}$. If for all $n \geq n_0$, $(1 + \frac{3}{2}k(\rho_s + 2\kappa_n)) < kr$, then for all $n \geq n_0$ the lower bound on edit distance given by Corollary~\ref{cor:largeED} exceeds the upper bound given by Corollary~\ref{cor:smallED}. In turn, informally we have the desired property that we can use the edit distance between substrings of length $k \ln n$ in $s_1$ and $s_2$ to test if these substrings are close in the canonical alignment. So for the rest of this section, we will fix $\rho_s, \rho_i, \rho_i', \rho_d, \rho_d', r$ to be positive values satisfying these conditions for all $n \geq n_0$. Once these values are fixed we can make the failure probabilities in both corollaries $n^{-c}$ with any exponent $c$ of our choice ($c = 2$ will suffice to achieve a final failure probability of $O(1/n)$) by choosing a sufficiently large $k$ depending only on $c$ and $n_0$. So we also fix $k$ to be said sufficiently large value.

\subsection{Algorithm for Quickly Finding an Approximate Alignment}

We now describe the algorithm \textsc{ApproxAlign}, given as Algorithm 1, which finds the approximate alignment $f'$. Informally, \textsc{ApproxAlign} runs as follows: It starts by initializing $f'(1) = 1$, which is of course exactly correct. By Lemma~\ref{lemma:localshifts}, we know that  $\f(k \ln n + 1)$ will be within $O(\ln n)$ of $1 + k \ln n$. So, to decide what $f'(k \ln n + 1)$ will be, we compute the edit distance between bits $k \ln n + 1$ to $2k \ln n$ of $s_1$ and bits $j$ to $j + k \ln n - 1$ of $s_2$ for various values of $j$ close to $1 + k \ln n$. By Corollary~\ref{cor:smallED} we know that when $j$ is near  $\f(k \ln n + 1)$, the edit distance will be small, and by Corollary~\ref{cor:largeED} we know that when $j$ is far from  $\f(k \ln n + 1)$ the edit distance will be large. So whichever value of $j$ causes the edit distance to be minimized is not too far from the true value of  $\f(k \ln n + 1)$. Once we've decided on the value $f'(k \ln n + 1)$, we proceed analogously to choose a value for $f'(2k \ln n + 1)$, using $f'(k \ln n+1)$ to decide what range of values try, and so on. We now formally prove our guarantee for \textsc{ApproxAlign} (including the runtime guarantee).

\begin{algorithm}\label{alg:approxalign}
  \caption{Algorithm for Approximate Alignment}
  \begin{algorithmic}[1]
    \Function{\textsc{ApproxAlign}}{$s_1$, $s_2$}
      \State $f'(1) \leftarrow 1$
      \State $J  \leftarrow 2 \lceil(\frac{3}{2}\kappa_n+1) \cdot k\rceil$
      \For{$i = 1, 2, \ldots \lfloor \frac{n}{k \ln n} \rfloor-1 $}
      \State $minED \leftarrow \infty$
      \For{$j = -J, -J+1, \ldots J$} \par 
      \State $s_1' \leftarrow$ bits $ik \ln n + 1$ to $(i+1)k \ln n$ of $s_1$ 
      \State $s_2' \leftarrow$ bits $f'((i-1)k \ln n+1)+(j+k)\ln n$ to \par\qquad\qquad $f'((i-1)k \ln n+1)+(j+2k)\ln n-1$ of $s_2$
      \If{$ED(s_1', s_2') \leq minED$}
      \State $minED \leftarrow ED(s_1', s_2')$
      \State $f'(ik\ln n+1) \leftarrow f'((i-1)k \ln n)+(j+k) \ln n$
      \EndIf 
      \EndFor
      \EndFor
      \State \Return $f'$
    \EndFunction
  \end{algorithmic}
\end{algorithm}

\begin{lemma}\label{lemma:approxalign}
For $\samp{n}$,  \textsc{ApproxAlign}$(s_1, s_2)$ computes in time $O(n \ln n)$ a function $f'$ such that with probability at least $1-n^{-\Omega(1)}$, for all $i$ where $f'(i)$ is defined $|f'(i) - \f(i)| \leq \lceil (\frac{3}{2}\kappa_n+1) k \ln n \rceil$.
\end{lemma}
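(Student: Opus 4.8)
The plan is to argue by induction on $i$ that the invariant $|f'(ik\ln n + 1) - \f(ik\ln n+1)| \le \lceil(\tfrac{3}{2}\kappa_n+1)k\ln n\rceil$ holds for every iteration, and that each iteration takes $O(\ln^2 n)$ time, so that the total runtime is $O(\tfrac{n}{k\ln n}) \cdot O(\ln^2 n) \cdot O(1) = O(n\ln n)$ (the factor $O(1)$ accounts for the $2J+1 = O(1)$ values of $j$ tried per iteration, each requiring one edit-distance computation on strings of length $k\ln n$, which costs $O((k\ln n)^2) = O(\ln^2 n)$). The base case $i=0$ is immediate since $f'(1) = 1 = \f(1)$ exactly. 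For the inductive step, I would condition on the invariant holding at step $i-1$, i.e. $|f'((i-1)k\ln n+1) - \f((i-1)k\ln n+1)| \le \lceil(\tfrac{3}{2}\kappa_n+1)k\ln n\rceil$, and show that the $j$ minimizing the edit distance in the inner loop yields an $f'(ik\ln n+1)$ within the allowed window of $\f(ik\ln n+1)$.

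The core of the step is a two-sided comparison using Corollaries~\ref{cor:smallED} and \ref{cor:largeED}. First I would check that the inner loop's search range for $s_2'$ (indices offset by $(j+k)\ln n$ for $j \in \{-J,\dots,J\}$, so offsets ranging over roughly $\pm J\ln n = \pm 2\lceil(\tfrac32\kappa_n+1)k\rceil\ln n$ around $f'((i-1)k\ln n+1) + k\ln n$) actually contains some index $i_2^{\mathrm{good}}$ with $|i_2^{\mathrm{good}} - \f(ik\ln n+1)| \le \ln n$; this follows because $\f(ik\ln n+1)$ is within $\tfrac32\kappa_n k\ln n$ of $\f((i-1)k\ln n+1)+k\ln n$ by Lemma~\ref{lemma:localshifts}, which in turn is within the inductive error of $f'((i-1)k\ln n+1)+k\ln n$, and the step sizes are $\ln n$ so granularity costs only an additive $\ln n$. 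For this $j$, Corollary~\ref{cor:smallED} gives $ED(s_1',s_2') \le (1 + \tfrac32(\rho_s + 2\kappa_n))k\ln n$ with high probability. Second, for any $j$ in the loop whose corresponding $i_2$ satisfies $|i_2 - \f(ik\ln n+1)| > (\tfrac32\kappa_n+1)k\ln n$, Corollary~\ref{cor:largeED} with $D = kr\ln n$ gives $ED(s_1',s_2') > kr\ln n$ with high probability, and by the parameter choices fixed after Corollary~\ref{cor:largeED} we have $kr\ln n > (1+\tfrac32 k(\rho_s+2\kappa_n))\ln n$, so the "large" lower bound strictly exceeds the "small" upper bound. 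Hence the argmin $j$ must have its $i_2$ within $(\tfrac32\kappa_n+1)k\ln n$ of $\f(ik\ln n+1)$, and accounting for the $O(\ln n)$ granularity rounding this gives $|f'(ik\ln n+1) - \f(ik\ln n+1)| \le \lceil(\tfrac32\kappa_n+1)k\ln n\rceil$, closing the induction.

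For the failure probability, I would take a union bound: there are $O(n/\ln n)$ iterations, and within each iteration $O(1)$ applications of Corollaries~\ref{cor:smallED} and \ref{cor:largeED}, each failing with probability $n^{-\Omega(k)}$ (which, by the remark fixing $k$ large enough depending on the desired exponent, can be made $n^{-c}$ for any constant $c$ we like, in particular $c \ge 2$). Multiplying by the $O(n)$ union-bound terms still leaves failure probability $n^{-\Omega(1)}$. One subtlety worth flagging: the corollaries are stated for the process $\mathcal{P}$ where $i_2$ is chosen as an arbitrary (possibly adversarial) function of the part of the randomness revealed before $i_1$, so I need to make sure that $f'((i-1)k\ln n+1)$ — which the algorithm uses to set the search window, hence the $i_2$ values tested at step $i$ — depends only on bits of $s_1, s_2$ up to roughly index $ik\ln n$, and that the relevant fresh randomness (bits $ik\ln n+1$ onward of $s_1$ and the channel acting on them) is independent of that history; this is exactly the independence structure the corollary statements are designed to exploit, so it should go through, but it is the place where one must be careful about conditioning.

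I expect the main obstacle to be this last point — verifying cleanly that the inductively-constructed $f'$ values are "admissible choices of $i_2$" in the sense required by Corollaries~\ref{cor:smallED} and \ref{cor:largeED}, i.e. that the adaptivity of the algorithm does not break the independence those corollaries rely on — together with bookkeeping that the search window $J$ is chosen wide enough (given the inductive slack plus the Lemma~\ref{lemma:localshifts} drift plus $\ln n$ granularity) to always contain a genuinely good $i_2$, while still being $O(1)$ in the number of $j$ values so the runtime bound holds.
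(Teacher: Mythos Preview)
Your proposal is correct and follows essentially the same approach as the paper's own proof: induction on $i$ with the invariant $|f'(ik\ln n+1)-\f(ik\ln n+1)|\le\lceil(\tfrac32\kappa_n+1)k\ln n\rceil$, using Lemma~\ref{lemma:localshifts} plus the inductive hypothesis to show the search window contains a $j$ within $\ln n$ of $\f(ik\ln n+1)$, then invoking Corollaries~\ref{cor:smallED} and~\ref{cor:largeED} with the parameter separation $kr>(1+\tfrac32k(\rho_s+2\kappa_n))$ to conclude the argmin lands in the allowed window, finishing with a union bound over $O(n)$ events and the same $O(n\ln n)$ runtime accounting. Your explicit flagging of the adaptivity/independence subtlety (that $f'((i-1)k\ln n+1)$ must be an admissible choice of $i_2$ for the corollaries) is in fact more careful than the paper, which applies the corollaries without discussing this point.
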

\begin{proof}
We proceed by induction. Clearly, $|f'(1) - \f(1)| = |1 - 1| \leq \lceil (\frac{3}{2}\kappa_n+1) \cdot k \ln n \rceil$. Suppose $|f'((i-1) k \ln n + 1) - \f((i-1) k \ln n + 1)| \leq \lceil (\frac{3}{2}\kappa_n+1) \cdot k \ln n \rceil$. By Lemma~\ref{lemma:localshifts} and our choices of constants, with probability $1 - n^{-\Omega(1)}$, $|\f((i-1)k \ln n + 1) + k \ln n - \f(ik \ln n + 1)| \leq \frac{3}{2} \kappa_n \cdot k \ln n$. This gives:

\begin{align*}
&\left|[f'((i-1) k \ln n + 1) + k \ln n] - \f(ik \ln n + 1)\right| \leq \\
&\left|[f'((i-1) k \ln n + 1) + k \ln n] - [\f((i-1) k \ln n + 1) + k \ln n]\right|\\
&\qquad  +\left|[\f((i-1)k \ln n + 1) + k \ln n] - \f(ik \ln n + 1)\right| =\\
&\left|f'((i-1) k \ln n + 1) - \f((i-1) k \ln n + 1)\right| \\
&\qquad + \left|\f((i-1)k \ln n + 1) + k \ln n - \f(ik \ln n + 1)\right| \leq \\
&\left\lceil (\frac{3}{2}\kappa_n+1) \cdot k \ln n \right\rceil + \frac{3}{2}\kappa_n \cdot k \ln n \leq J.\\
\end{align*}
So for some $j$ in the range iterated over by the algorithm, $|f'((i-1)k \ln n+1)+(j+k)\ln n - \f(ik \ln n + 1)| \leq \ln n$ and thus the minimum edit distance $minED$ found by the algorithm in iterating over the $j$ values is at most $(1 + \frac{3}{2}k(\rho_s +2 \kappa_n) \ln n < kr \ln n$ by Corollary~\ref{cor:smallED} with probability at least $1 - n^{-\Omega(1)}$. By Corollary~\ref{cor:largeED}, with probability at least $1 - n^{-\Omega(1)}$ the final value of $f'(ik \ln n + 1)$ can't differ from  $\f(ik \ln n + 1)$ by more than $\lceil (\frac{3}{2}\kappa_n+1) \cdot k \ln n \rceil$ as desired - otherwise, by the corollary with high probability $minED$ would be larger than $kr \ln n$.

Thus by induction, $|f'(i) - \f(i)| \leq \lceil (\frac{3}{2}\kappa_n+1) \cdot k \ln n \rceil$ for all $i$ if the high probability events of Lemma~\ref{lemma:localshifts}, Corollary~\ref{cor:smallED}, and Corollary~\ref{cor:largeED} occur in all inductive steps. Across all inductive steps we require $O(n)$ such events to occur, and each occurs with probability $1-n^{-\Omega(1)}$ where the negative exponent can be made arbitrarily large, so by a union bound we can conclude that with probability $1-n^{-\Omega(1)}$, $|f'(i) - \f(i)| \leq 2k \ln n$ for all $i$.

For runtime, note that the for loops iterate over $O(\frac{n}{\ln n})$ values of $i$ and $O(1)$ values of $j$. For each $i, j$ pair, we perform an edit distance computation between two strings of length $O(\ln n)$ which can be in done in $O(\ln^2 n)$ time using the canonical dynamic programming algorithm. So the overall runtime is $O(n \ln n)$. 
\end{proof}
\section{Error Analysis with Indels}\label{sec:indels}

In this section, we extend the results from Section~\ref{sec:subonly} to the case where indels are present. 

\begin{lemma}\label{lemma:offdiag-id}
For any realization of $\samp{n}$, let $s_1'$ be the restriction of $s_1$ to any fixed subset of indices $B$ of total size $\ell \geq k \ln n$, $s_2'$ be the substring of $s_2$ that $A^*$ aligns with $s_1'$, and let $(A^*)_B$ denote the restriction of the alignment $A^*$ to indices in $s_1', s_2'$. Then with probability $1-e^{-\Omega(\ell)}$ over $\samp{n}$, $A > (A^*)_B$ for all alignments $A$ of $s_1', s_2'$ such that $A$ and $(A^*)_B$ do not share any edges.
\end{lemma}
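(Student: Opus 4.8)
The plan is to mimic the structure of the proof of Lemma~\ref{lemma:offdiag}, but now I must handle the fact that with indels present the two substrings $s_1', s_2'$ no longer have equal length and the "diagonal" alignment $(A^*)_B$ is no longer literally the diagonal. Let me denote the lengths of $s_1'$ and $s_2'$ by $\ell_1 = \ell \geq k\ln n$ and $\ell_2$ respectively. First I would observe that, conditioned on $\mathcal{E}$ (equivalently, on the structure of $A^*$), the string $s_2'$ is obtained from $s_1'$ by substitutions and insertions along $(A^*)_B$, and the key probabilistic input we retain is: for any off-diagonal edge $((i-1,j-1),(i,j))$ of an alignment $A$ that shares no edge with $(A^*)_B$, the cost is $\mathrm{Bern}(1/2)$ even conditioned on the costs of all earlier edges of $A$. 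This is exactly the argument in Lemma~\ref{lemma:offdiag}: since the edge is off the canonical path, at least one of the two bits being compared is "fresh" uniform randomness independent of everything revealed so far (either a bit of $s_1'$ not yet touched, or an inserted bit of $s_2'$ not yet touched, or a substituted bit). I would need to make this last claim precise — arguing that for an alignment that never rejoins $(A^*)_B$, one can process its edges in path order and always find an unrevealed uniform bit among the pair compared at each match-type edge.

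Next I would upper bound the cost of $(A^*)_B$ by a Chernoff bound: its cost is at most the number of substitutions plus insertions occurring within $B$, which is at most some constant $c'\ell$ with probability $1 - e^{-\Omega(\ell)}$, where $c' = \Theta(\rho_s + \rho_i\rho_i')$ can be made a small constant (this uses Lemma~\ref{lemma:binomofgeo} for the insertions just as in Lemma~\ref{lemma:smallED}). Then I would union-bound over all alignments $A$ of $s_1', s_2'$ sharing no edge with $(A^*)_B$, grouped by their number of deletion-type edges $d_1$ (horizontal moves in the $s_1'$ direction) and insertion-type edges $d_2$; an alignment with $d_2$ insertions and $d_1$ deletions has $\ell_1 - d_1 = \ell_2 - d_2$ match-type edges, so its cost is stochastically at least $\mathrm{Binom}(\ell_1 - d_1, 1/2)$ plus $d_1 + d_2$. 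Alignments with $d_1 + d_2 \geq c'\ell$ cost too much automatically; for the rest, the number of match-type edges is at least $(1 - c')\ell_1 \geq (1-c')\ell$, and a Chernoff bound gives $\Pr[\text{cost} \leq c'\ell] \leq e^{-\Omega(\ell)}$ for $c'$ small. The number of alignments with $d_1$ deletions and $d_2$ insertions is $\binom{\ell_1 + d_2}{d_1, d_2, \ell_1 - d_1}$ (choose positions of the two kinds of non-match moves among the path steps), and for $d_1 + d_2 \leq c'\ell$ with $c'$ small this trinomial is at most $2^{H(c')\ell / \text{something}} = e^{O(c'\log(1/c'))\ell}$, which is dominated by the $e^{-\Omega(\ell)}$ probability bound once $c'$ (hence $\{\rho\}$) is small enough. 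Summing over the $O(\ell^2)$ choices of $(d_1,d_2)$ only costs a polynomial factor, preserving $e^{-\Omega(\ell)}$.

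The main obstacle I anticipate is the conditioning argument for the $\mathrm{Bern}(1/2)$ cost of off-diagonal match edges: unlike the clean substitution-only case, here $s_2'$ contains a mixture of bits inherited (possibly flipped) from $s_1'$ and freshly inserted bits, and the dependency structure is more delicate because an alignment could, say, align an inserted bit of $s_2'$ against a bit of $s_1'$ that it later also effectively "uses." I would handle this by a careful inductive exposure: walk along $A$ in path order, and at each match edge $((i-1,j-1),(i,j))$, note that since $(i,j) \notin (A^*)_B$, the pair $((s_1')_i, (s_2')_j)$ contains at least one bit whose value has not been determined by the edges of $A$ processed so far and which is an independent fair coin given that history (the bit of $s_1'$ if row $i$ was not previously visited via a deletion/substitution-revealing edge, or the inserted bit of $s_2'$ otherwise) — so the XOR, i.e. the indicator that the edge has cost $1$, is $\mathrm{Bern}(1/2)$ conditioned on the past. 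Making "has not been determined by earlier edges" airtight, given that $A$ may revisit rows and columns, is the crux; once that is in hand the counting and Chernoff steps are routine and mirror Lemma~\ref{lemma:offdiag}. Finally I would note the resulting bound on $\{\rho\}$ (an inequality analogous to \eqref{eqn:bound-c}) is what we fold into the constants $\{\rho\}$, and conclude.
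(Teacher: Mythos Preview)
Your approach is correct and essentially the same as the paper's. Two small corrections: the cost of $(A^*)_B$ must also account for deletions, so the threshold should be $c = \frac{3}{2}(\rho_s + \kappa_n)$ rather than $\Theta(\rho_s + \rho_i\rho_i')$; and the paper streamlines your two-parameter sum over $(d_1,d_2)$ by first conditioning on the indel positions, bounding $r = |\ell_1 - \ell_2| \leq \kappa_n \ell$ with high probability, and then summing over the single parameter $d$ (with the other forced to $d+r$). The $\mathrm{Bern}(1/2)$ conditional-independence step you flag as the main obstacle is treated by the paper as inherited verbatim from Lemma~\ref{lemma:offdiag} with no further comment---your inductive-exposure sketch is the right justification and goes through cleanly once you note that an off-$(A^*)_B$ diagonal edge at $(i,j)$ can never compare $(s_1')_i$ with its own channel image in $s_2'$.
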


\begin{proof}

We proceed similarly to Lemma~\ref{lemma:offdiag}, but for the case with indels. The same analysis as in Lemma~\ref{lemma:smallED} gives that that for a fixed $s_1'$, $$\Pr_{\samp{n}}[(A^*)_B\geq \frac{3}{2}\left(\rho_s + \kappa_n \right)\ell] \leq e^{-\rho_s \ell /12}+2e^{-\rho_i \ell /60}+3e^{-\rho_d \ell / 60}.$$
Our goal now is to show any alignment $A$ of $s_1', s_2'$ that shares no edges with $(A^*)_B$ has $A>c\ell$ with high probability, where $c = \frac{3}{2}\rho_s + \kappa_n$.

Fix any realization $\zeta$ of the positions of indels generated by $\samp{n}$, without fixing the values of $s_1$, the inserted bits, or the positions of substitutions. Let $\ell_1=\ell$ and $\ell_2$ be the lengths of $s_1'$ and $s_2'$. Let $r=|\ell_1 - \ell_2|$. A similar analysis to Lemma~\ref{lemma:localshifts} gives that $r \leq \kappa_n \ell$ with probability $1-e^{-\Omega(\ell)}$, so it suffices to prove the lemma statement holds with high probability conditioned on any $\zeta$ such that $r < \kappa_n \ell$, so we condition on $\zeta$ for the rest of the proof. Assume without loss of generality that $\ell_2 - \ell_1 = r$, i.e. that the $r$ excess indels are insertions. The counting argument is similar when $\ell_1 - \ell_2 = r$. As before, we sum over the number of deletions, $d$, which corresponds to $d+r$ insertions and $\ell-d$ substitutions.
\begin{align*}
     Pr[\exists A, A \leq c\ell]
     &\leq \sum_{d=0}^{c\ell/2}\sum_{A\in\mathcal{A} \text{ with } d \text{ deletions}} Pr[A\leq c\ell]\\
     &\leq \sum_{d=0}^{c\ell/2}\binom{\ell+d+r}{d,d+r,\ell-d} Pr[Binom(\ell-d,\frac{1}{2})\leq c\ell - 2d-r]\\
     &\leq \frac{c\ell}{2}
        \binom{(1+\frac{c}{2})\ell + r}
            {\frac{c}{2}\ell,
             \frac{c}{2}\ell + r,
             (1-\frac{c}{2})\ell} 
        Pr[Binom((1-\frac{c}{2})\ell,\frac{1}{2})\leq c\ell].
\end{align*}
  Where the probability is taken over the events we haven't conditioned on, i.e. the realization of $s_1$, the inserted bits, and the positions of substitutions. Since we assume $r<\kappa_n \ell$, then 
  $\binom{(1+\frac{c}{2})\ell + r}
    {\frac{c}{2}\ell,\frac{c}{2}\ell + r,(1-\frac{c}{2})\ell} 
   \leq 
   \binom{(1+\frac{c}{2} + \kappa_n)\ell} 
     {\frac{c}{2}\ell,(\frac{c}{2}+\kappa_n)\ell,(1-\frac{c}{2})\ell}$ 
  with high probability. Note also that $\kappa_n \leq \frac{3}{2}\rho_s + \kappa_n < c$. Hence, similar to Equation~(\ref{eqn:tri-stirling}) from Lemma~\ref{lemma:offdiag}, Stirling's approximation gives an upperbound on the trinomial
  $$
    \binom{(1+\frac{c}{2} + \kappa_n)\ell} 
        {\frac{c}{2}\ell,
        (\frac{c}{2}+\kappa_n) \ell,
        (1-\frac{c}{2})\ell}
  \leq   
    \frac{e}{(2\pi)^{3/2}}
    \frac{2}{c\ell}
    \sqrt{\frac{2+3c}{2-c}}
    \left[
       \frac{(1+\frac{3}{2}c)^{(1+\frac{3}{2}c)}}
       { 
            (\frac{c}{2})^{c}
            (1-\frac{c}{2})^{(1-\frac{c}{2})}
       } 
    \right]^\ell.
  $$
We combine this with Equation~(\ref{eqn:sub-chernoff}) from Lemma~\ref{lemma:offdiag} for the term $Pr[Binom((1-\frac{c}{2})\ell,\frac{1}{2})\leq c\ell]$, to get that when $c < 0.03485$, the probability decays exponentially in $\ell$. Hence requiring that $\frac{3}{2}\rho_s + \kappa_n < .03485$ ensures that $A>(A^*)_B$ with high probability.
\end{proof}

\begin{lemma}\label{lemma:offdiaggen-id}
For $\samp{n}$, with probability $1-n^{-\Omega(1)}$ for all alignments $A$ in the range of $\SR$, $A \geq A^*$.
\end{lemma}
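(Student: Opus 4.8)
The plan is to mimic the structure of the proof of Lemma~\ref{lemma:offdiaggen} from the substitution-only case, replacing the internal appeals to Lemma~\ref{lemma:offdiag} with appeals to its indel-analogue Lemma~\ref{lemma:offdiag-id}, and being careful about the fact that now breaks are measured by their length with respect to $s_1$ (indices of $s_1$) while the corresponding restriction of $s_2$ may be a constant factor longer or shorter. Concretely, I would first group the alignments in the range of $\SR$ by the total length of their breaks from $A^*$: let $\mathcal{A}_i$ be the set of alignments in $\SR(\mathcal{A})$ whose breaks have total length (measured in $s_1$-indices) in $[ik\ln n, (i+1)k\ln n)$, so that $\{\mathcal{A}_i : 0 \leq i \leq \frac{n}{k\ln n}\}$ is a disjoint cover of $\SR(\mathcal{A})$, with $\mathcal{A}_0 = \{A^*\}$. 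Each $A \in \mathcal{A}_i$ has at most $i$ breaks, each of length at least $k\ln n$.

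Next, as before, I would further partition each $\mathcal{A}_i$ by \emph{breakpoint configuration}: the set of starting and ending $s_1$-indices of all breaks. For a fixed configuration $B \in \mathcal{B}_i$, let $\mathcal{A}_B$ be the alignments realizing exactly those breaks, and let $s_1^B, s_2^B$ be the restrictions of $s_1, s_2$ to the indices spanned by the breaks of $B$ (for $s_2$, the indices that $A^*$ aligns with these $s_1$-indices). For $A \in \mathcal{A}_B$, $A < A^*$ iff $(A)_B < (A^*)_B$, and $(A)_B$ shares no edges with $(A^*)_B$. Since the total $s_1$-length is $\ell \geq ik\ln n$, Lemma~\ref{lemma:offdiag-id} gives $\Pr[\exists A \in \mathcal{A}_B, A < A^*] \leq e^{-\Omega(ik\ln n)} = n^{-\Omega(ik)}$. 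Summing over configurations, $\Pr[\exists A \in \SR(\mathcal{A}), A < A^*] \leq \sum_{i=1}^{n/k\ln n} |\mathcal{B}_i| \, n^{-\Omega(ik)}$, so it remains to bound $|\mathcal{B}_i|$.

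To count $|\mathcal{B}_i|$ I would reuse the injective-encoding trick from Lemma~\ref{lemma:offdiaggen}: a configuration $B \in \mathcal{B}_i$ is mapped to at most $i$ contiguous subsets of $[n]$ each of size in $[k\ln n, 2k\ln n)$ (or empty), by peeling $k\ln n$ indices at a time off each break; such a subset is determined by its smallest element and its size, so there are at most $nk\ln n + 1$ of them and $|\mathcal{B}_i| \leq (nk\ln n + 1)^i$. The one subtlety relative to the substitution case is that a break's "length" in $B$ is its $s_1$-length, whereas Lemma~\ref{lemma:offdiag-id} is stated in terms of the $s_1$-length $\ell$ of the restriction, so this matches up directly and no reconciliation between $s_1$- and $s_2$-indices is needed in the counting; the $s_1$-vs-$s_2$ discrepancy is already absorbed inside Lemma~\ref{lemma:offdiag-id}. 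Plugging in, for $k$ a sufficiently large constant, $\sum_{i=1}^{n/k\ln n} (nk\ln n+1)^i n^{-\Omega(ik)} \leq n^{-\Omega(k)}$, which gives the claimed $1 - n^{-\Omega(1)}$ bound.

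The main obstacle I anticipate is purely bookkeeping: making sure the restriction $s_1^B, s_2^B$ is genuinely distributed (or stochastically bounded) like $ID(\ell)$ for $\ell \geq ik\ln n$ so that Lemma~\ref{lemma:offdiag-id} applies verbatim, and checking that the decomposition $A < A^* \iff (A)_B < (A^*)_B$ is valid — i.e. that costs outside the breaks are identical for $A$ and $A^*$ and that the pieces of $A^*$ and $A$ between consecutive breaks agree. This is the analogue of the corresponding step in Lemma~\ref{lemma:offdiaggen} and should go through the same way, but with indels one must be a little careful that breaks are well-separated along $A^*$ and that the ``independent and uniformly random'' structure needed for the off-diagonal argument survives conditioning on the indel positions (which is exactly what Lemma~\ref{lemma:offdiag-id} already handles internally).
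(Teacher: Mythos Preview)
Your proposal is correct and follows essentially the same approach as the paper: define $\mathcal{A}_i$, $\mathcal{B}_i$, $\mathcal{A}_B$ exactly as in Lemma~\ref{lemma:offdiaggen} using $s_1$-indices, apply Lemma~\ref{lemma:offdiag-id} to each breakpoint configuration to get the $n^{-\Omega(ik)}$ bound, reuse the injective encoding to conclude $|\mathcal{B}_i| \leq (nk\ln n+1)^i$, and sum. The paper's proof is in fact even terser than yours and explicitly defers to Lemma~\ref{lemma:offdiaggen} for the details; your observation that indexing breaks by $s_1$ makes the counting go through unchanged, with the $s_1$/$s_2$ length discrepancy absorbed inside Lemma~\ref{lemma:offdiag-id}, is exactly the point the paper highlights.
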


\begin{proof}

The proof proceeds similarly to that of Lemma~\ref{lemma:offdiaggen}. Recall that the starting/ending indices and the lengths of breaks are defined with respect to the indices in $s_1$. Since $s_1$'s length is $n$ always, we can define sets of break points independently of the realization of $ID(n)$, and so we define $\mathcal{A}_i$, $\mathcal{B}_i$, $\mathcal{A}_B$ as in Lemma~\ref{lemma:offdiaggen}. The restriction of $s_1$ to a fixed subset of indices in the statement~\ref{lemma:offdiag-id} can be applied to the subsets of indices contained in breaks, so the same analysis as in Lemma~\ref{lemma:offdiaggen} gives: 
$$Pr[\exists A\in \SR(\mathcal{A}), A<A^*]
    = \sum_{i=1}^{\frac{n}{k\ln n}}
        \left|\mathcal{B}_i\right| 
            n^{-\Omega(ik)}.$$
Since breakpoints are defined with respect to the fixed-length string $s_1$, as before we have $|\mathcal{B}_i| \leq (nk \ln n + 1)^i$ and thus $Pr[\exists A\in \SR(\mathcal{A}), A<A^*] \leq n^{-\Omega(k)}$ as desired.
\end{proof}

\begin{proof}[Proof of Theorem~\ref{thm:main}]
We estimate $f_{A^*}$ using \textsc{ApproxAlign} to obtain $f'$. Then, we use the standard DP algorithm restricted to entries that are within distance $k_2\ln n$ from $(i,f'(i))$ for some $i$. By Theorem~\ref{lemma:approxalign}, for any fixed $k$, if $k_2$ is sufficiently large, this range of entries computed contains all entries within distance $k \ln n$ of $A^*$, i.e. contains the range of $\LR$. Fact~\ref{fact:lbr} and Corollary~\ref{corollary:sbrrange} also hold when indels are present, so by Lemma~\ref{lemma:offdiaggen-id}, the optimality of the DP algorithm gives that the algorithm is correct.

For runtime, note that \textsc{ApproxAlign} runs in $O(n \ln n)$ time per Theorem~\ref{lemma:approxalign} and the set of entries considered by the DP algorithm is size at most $O(n \ln n)$ (each of the $n / \ln n$ indices where $f'$ is defined contribute $O(\ln^2 n)$ entries to be computed), and each entry can be computed in constant time. So the overall runtime is $O(n \ln n)$ as desired.
\end{proof}

\section*{Acknowledgements}

We thank Satish Rao for suggesting the problem and for pointing out the connection to alignment heuristics used in practice. We thank Nir Yosef for helpful discussions on models for indels used in computational biology. We thank the anonymous reviews for their helpful feedback regarding the presentation of the results.

\bibliographystyle{alpha}
\bibliography{ref}

\end{document}